\theoremstyle{plain}     
\newtheorem{thm}{Theorem}
\newtheorem{lem}{Lemma}
\theoremstyle{definition}
\theoremstyle{remark}
\begin{document}
%
\title{Game-Theoretic Optimization for Machine-Type Communications Under QoS Guarantee}

\author{\IEEEauthorblockN{Yu Gu,~\IEEEmembership{Student Member,~IEEE,}
                          Qimei Cui,~\IEEEmembership{Senior Member,~IEEE,}\\
                          Qiang Ye,~\IEEEmembership{Member,~IEEE,}
                          Weihua Zhuang,~\IEEEmembership{Fellow,~IEEE,}}
\thanks{Yu Gu and Qimei Cui are with the National Engineering Laboratory for Mobile Network Technologies, Beijing University of Posts and Telecommunications, Beijing 100876, China (e-mail: guyu@bupt.edu.cn; cuiqimei@bupt.edu.cn). Corresponding
author: Qimei Cui.}
\thanks{Qiang Ye and Weihua Zhuang are with the Department of Electrical and Computer Engineering, University of Waterloo, Waterloo, ON N2L 3G1, Canada (e-mail: q6ye@uwaterloo.ca; wzhuang@uwaterloo.ca).}
}

\maketitle

\begin{abstract}
Massive machine-type communication (mMTC) is a new focus of services in fifth generation (5G) communication networks. The associated stringent delay requirement of end-to-end (E2E) service deliveries poses technical challenges. In this paper, we propose a joint random access and data transmission protocol for mMTC to guarantee E2E service quality of different traffic types. First, we develop a priority-queueing-based access class barring (ACB) model and a novel effective capacity is derived. Then, we model the priority-queueing-based ACB policy as a non-cooperative game, where utility is defined as the difference between effective capacity and access penalty price. We prove the existence and uniqueness of Nash equilibrium (NE) of the non-cooperative game, which is also a sub-modular utility maximization problem and can be solved by a greedy updating algorithm with convergence to the unique NE. To further improve the efficiency, we present a price-update algorithm, which converges to a local optimum. Simulations demonstrate the performance of the derived effective capacity and the effectiveness of the proposed  algorithms.
\end{abstract}

\begin{IEEEkeywords}
Massive machine-type communications (mMTC), random access, effective capacity, quality of service (QoS), non-cooperative game, sub-modular function.
\end{IEEEkeywords}

%
\IEEEpeerreviewmaketitle

\section{Introduction}
With the rapid penetration of emerging automated services, machine-type communication (MTC) has become a focus of new services in fifth generation (5G) communication networks \cite{ye2017token}. For example, the smart cities span a wide variety of service applications, from traffic management to grid power distribution, to urban security and environmental monitoring \cite{2,21};  industrial wireless sensor devices bring the industry automation into the era of Industry 4.0 by providing the ubiquitous perception to improve the production efficiency \cite{3,new3}; health--oriented wearable devices offer biometric measurements such as heart rate, perspiration level, and oxygen level in the bloodstream \cite{31,new1}. Industry analysts predict that 50 billion devices will be interconnected via mobile networks worldwide by 2020 \cite{yaqoob2017internet}. Different from human-driven communication, MTC involves a massive number of devices with low-complexity and low-power, referred to as massive machine-type communications (mMTC). For mMTC, traffic overloading on the random access channel (RACH) is a dominate factor causing performance bottleneck due to simultaneous activation of devices. To alleviate the traffic arrival burstiness, the access class barring (ACB) \cite{6855701} and back-off procedure \cite{6720118} are proposed.

Another challenge is how to guarantee heterogeneous quality of-service (QoS), i.e., delay and reliability for devices with different traffic types \cite{4,fin}. For example, MTC for industrial automation requires extremely small delay and packer error probability (PER) \cite{7247338}. Although ACB or back-off procedure can prevent simultaneous massive access from activated devices, they result in a degradation of QoS. Thus, QoS-aware random access for mMTC has recently drawn attentions from researchers. An analytical model is proposed in \cite{8047949} to calculate maximum access delay for a distributed queueing-based random access protocol. In \cite{6423760}, the MTC network is modeled as a Beta/M/1 queueing system, where service time of the RACH procedure follows an exponential distribution. In \cite{7346396}, a general model is developed for the RACH procedure. The impact of short packets of machine-type services on the achievable rate is considered in \cite{7980731}. In traditional communications, the Shannon's channel coding theorem is employed under the assumption of infinity block-length, while for a finite block-length, a novel effective capacity is developed to guarantee statistical QoS for MTC. In \cite{8070468}, a cross-layer framework is proposed based on the effective bandwidth theory to optimize the packet dropping policy, power allocation policy, and bandwidth allocation policy under the QoS constraint. A probabilistic bound for RACH access delay is derived using stochastic network calculus in \cite{vilgelm2017reliability}.

The overall performance of MTC depends on the performance not only in the RACH phase, but also in data transmission phase. The backlog status of each transmission queue of MTC devices, depending on traffic arrival and service processes, affects the number of devices activated in both RACH phase and data transmission phase. As a result, a joint consideration of both RACH procedure and data transmission is a key to capture the performance of MTC. Ruan et al. in \cite{7996795} consider a scenario where devices can transmit data at a random access slot only when it decides to transmit a random preamble sequence and other devices have not occupied the preamble. The transmission process is formulated as an infinite horizon average cost Markov decision process (MDP). In \cite{6786066}, a queueing model is developed for MTC under the assumption that a device is activated when the number of packets in its queue is larger than a threshold and all the packets in the queue are transmitted upon one successful preamble contention. Wiriaatmadja et al. derive a closed-form formula for throughput as a function of the number of resource block allocated for the PRACH and the traffic load \cite{wiriaatmadja2015hybrid}. According to the closed-form formula, a hybrid protocol is proposed to maximize the expected throughput.

How MTC devices with differentiated services should contend to access the RACH and be scheduled to transmit data for heterogeneous QoS provisioning needs in-depth investigation \cite{new4,new5}. First, the RACH and data transmission phases are correlated, in presence of transmission queueing dynamics among devices. A unified analytical framework to quantify the QoS-provisioning capabilities in both random access and data transmission is essential. Moreover, how to guarantee differentiated QoS requirements of MTC devices supporting multiple traffic types is challenging. For massive access from MTC devices, a centralized control solution corresponds to a large communication overhead due to global information exchange with a large number of devices. Thus, a distributed QoS provisioning scheme is desired.

In this paper, we propose a joint random access and data transmission protocol to guarantee the end-to-end (E2E) QoS of different traffic types in mMTC. To deal with the aforementioned challenges, in this paper, we propose a hybrid random access and data transmission protocol to guarantee the end-to-end (E2E) QoS of different traffic types in mMTC. The main contributions of this paper can be summarized as follows:
\begin{itemize}
  \item First, we develop a priority-queueing-based ACB model and derive a novel effective capacity, taking into account the random access and short packet transmissions. The priority-queueing-based ACB model not only alleviates the traffic arrival burstiness, but also guarantees the differentiated QoS requirements of MTC devices.
  \item Second, based on the derived effective capacity, the barring policy is modeled as a non-cooperative game. We prove the existence and uniqueness of Nash equilibrium (NE) in the noncooperative game. Furthermore, a distributed iterative algorithm that converges to the equilibrium point is proposed.
  \item Thirdly, to improve the efficiency of NE, we provide a price-update algorithm, which converges to a locally optimal point of the effective capacity optimization problem. Simulation results show the performance of the novel effective capacity and the effectiveness of our proposed algorithms.
\end{itemize}

The remainder of this paper is organized as follows. The system model is presented in Section II. In Section III, we derive a new effective capacity, considering both RACH and data transmission phases. In Section IV, we present two game-theoretic barring policies with QoS guarantee. In Section V, simulation results are provided, followed by conclusions in Section VI.
\section{System Model}
\begin{figure}[!t]
\centering
\includegraphics[width=3in]{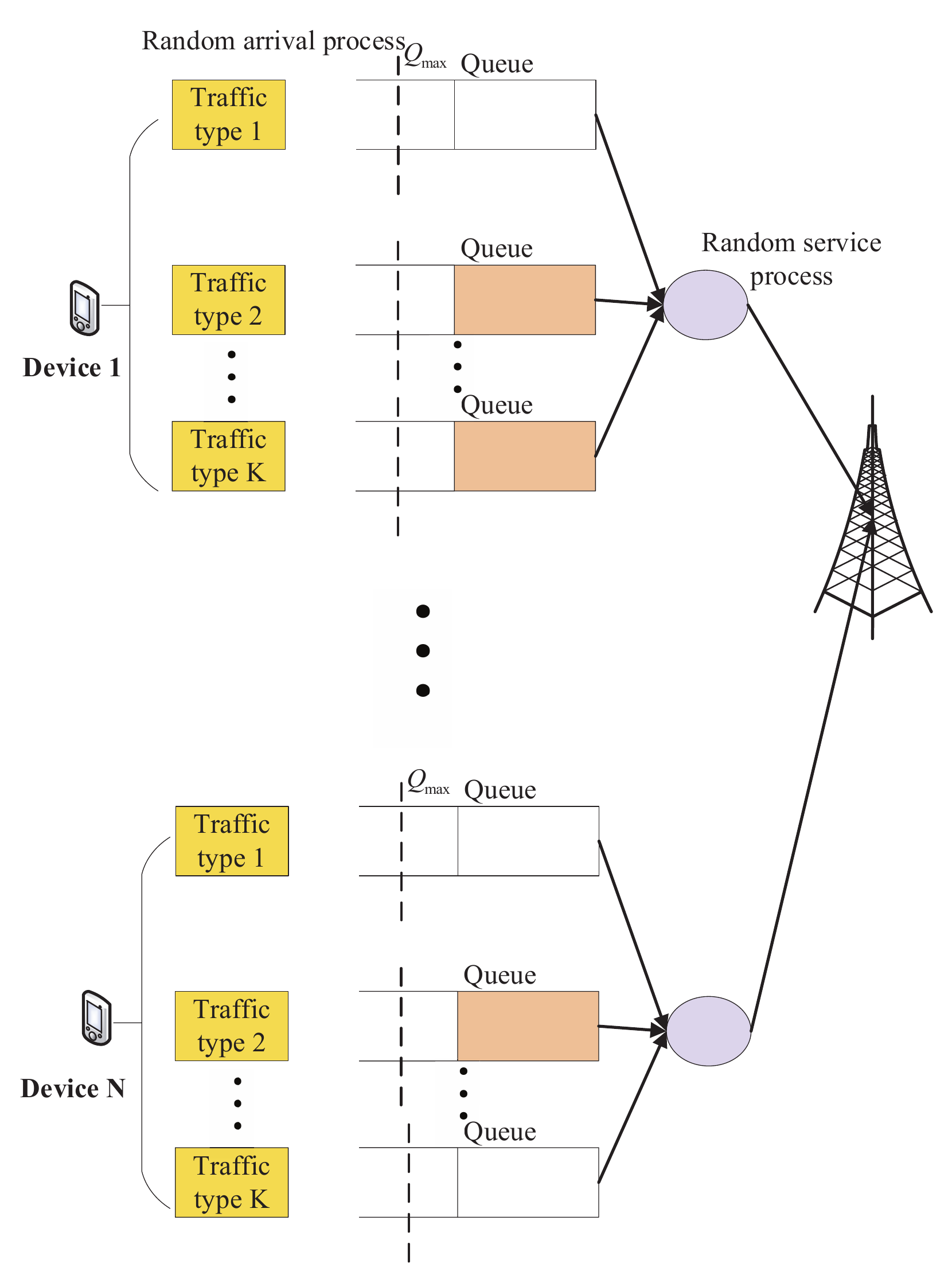}
\vspace{-0.4cm}
\caption{An example of system model including one BS and $N$ devices with three queue to transmit $K$ types of traffic .}
\label{fig1}
\vspace{-0.2cm}
\end{figure}

Consider the uplink of an mMTC system, composed of a BS and $N$ MTC devices which are randomly and uniformly distributed within the coverage of the BS, as shown in Fig. \ref{fig1}. Each device generates $K$ types of traffic, indexed by $1,2,...K$. We assume the smaller the index, the higher the transmission priority. Each device has $K$ First-Input-First-Output (FIFO) transmission queues for the $K$ types of traffic, respectively. Time is partitioned into a sequence of time slots, the duration of which is constant and denoted by $T_d$. Packet arrivals of the $k$-th traffic type at the $n$-th device over each time slot follows an independent and identically distributed Bernoulli process with probability $p_n$. The amount of the $k$-th traffic packet (in the unit of bit per second) at the $n$-th device at the $t$-th slot, denoted by $a_{n,k}[t]$, follows an exponential distribution with mean $\bar L_{n,k}$.



A conventional device initiates access attempts to the BS at the time when the transmission queues are non-empty. However, the simultaneous activation of a large number of devices causes access performance degradation for mMTC. To deal with massive access attempts with bursty traffic arrivals, a priority-queueing-based ACB is employed to reduce contention collisions of random access for QoS provisioning. Suppose the $k$-th queue of the $n$-th device is non-empty. If all the $k-1$ queues that have a lower index with higher priority than the $k$-th queue are empty, the $k$-th queue can be activated with probability $d_{n,k}$.

For uplink transmissions, time is partitioned into a sequence of transmission superframes, each of which consists of a \emph{random access phase} followed by a \emph{data transmission phase}, depicted in Fig. 2. The durations of a random access phase and a data transmission phase are denoted by $T_s$ and $T_f$, respectively, and the total transmission duration is denoted by $T_u$. The activated devices contend for preambles in $T_s$. Upon a successful contention, data is transmitted within a duration of $T_f$ using Orthogonal Frequency Division Multiple Access (OFDMA) techniques. Otherwise, the device will try again to access the BS in the random access phase of next superframe. We assume that ${T_s} \ll {T_f}$ to alleviate the signaling overhead due to random access. In general, the duration of each superframes $T_u$ is an integer multiple of time slots $T_d$. The wireless channel from a device to the BS is modeled as a Rayleigh block fading, i.e., the channel remains unchanged during a superframe of duration $T_u$, and changes independently between consecutive superframes.

In the \emph{random access phase}, a device attempts to connect to the BS based on a standard four step random access (RA) procedure. The detailed RA procedure for the uplink is as follows: 1) The device uniformly and randomly selects and then transmits one RA preamble from $M$ preambles in the physical random access channel (PRACH); 2) The BS sends random access responses (RARs) through the physical downlink shared channel (PDSCH) to devices whose preambles are decoded successfully; 3) Devices specified in the received RAR send the connection setup request messages; 4) If the BS correctly decodes a connection request, it replies with an acknowledgement to the device. In the \emph{data transmission phase}, we assume that the $k$-th queue of the $n$-th device successfully occupies a data transmission channel with the allocated bandwidth $B_n$ during the period $T_f$, which consists of $S_n$ OFDM symbols. We calculate $S_n$ by ${S_n} = \frac{{{T_f}}}{a}\frac{{{B_n}}}{c}$, where $a$ is the duration of a OFDM symbol, $c$ is the bandwidth of a OFDM symbol.

\begin{figure}[!t]
\centering
\includegraphics[width=3in]{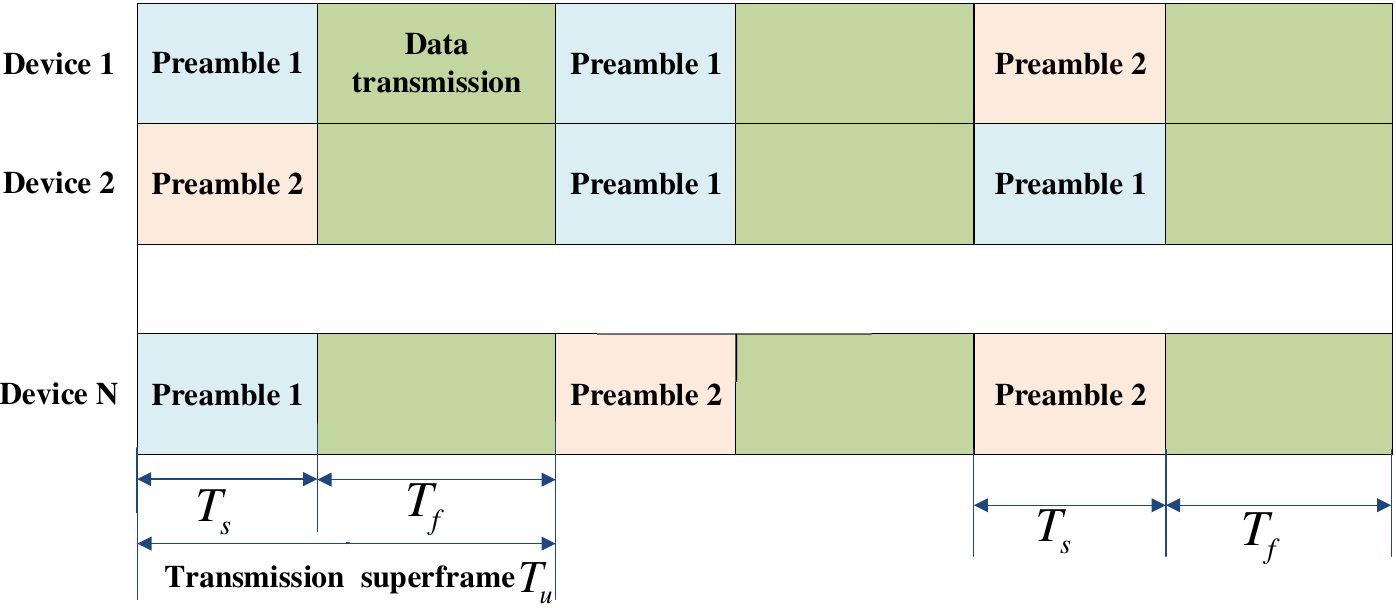}
\vspace{-0.4cm}
\caption{An illustration of MAC operations in each transmission superframe, including the random access phase with two preambles and the data transmission phase.}
\label{fig2}
\vspace{-0.2cm}
\end{figure}

\section{QoS Analysis of Multi-class MTC}
In this section, a novel analytical approach is proposed to analyze the performance of the priority-queueing-based ACB model, jointly considering the \emph{random access phase} and the \emph{data transmission phase}.

In the context of large deviation theory \cite{CuiEC}, QoS is characterized statistically by employing the QoS exponent $\bm \theta  = \left\{ {{\theta _{n,k}},n = 1, \cdots ,N}, k = 1, \cdots ,K  \right\}$, as given by \cite{EC}
\begin{equation}\label{1a1}
\theta_{n,k}= - \mathop {\lim }\limits_{{Q_{n,k}^{\rm th}} \to \infty } \frac{{\log (\Pr \{ Q_{n,k}[\infty] > {Q_{n,k}^{\rm th}}\} )}}{{{Q_{n,k}^{\rm th}}}}
\end{equation}
where $Q_{n,k}[t]$ is the length of transmission queue at the $k$-th queue of the $n$-th device at slot $t$, $Q_{n,k}^{\rm th}$ is the threshold of the queue length to guarantee the QoS of the $k$-th traffic type, and $\Pr \{ Q_{n,k}[\infty] > {Q_{n,k}^{\rm th}}\}$ is the QoS-violation probability that the queue length exceeds $Q_{n,k}^{\rm th}$. In (\ref{1a1}), $\theta_{n,k}$ provides the exponential decaying rate of the probability that the threshold is violated.
\subsection{Effective Bandwidth of Multi-class MTC}
For the arrival process $a_{n,k}[t]$ at the $k$-th queue of the $n$-th device, the effective bandwidth, denoted by $A_{n,k}(\theta_{n,k})$, specifies the minimum constant service rate for probabilistic QoS guarantee. Given the Bernoulli traffic arrival, the effective bandwidth (in bit/s) of the traffic is given by \cite{ChenEC}
\begin{equation}\label{EB}
\begin{aligned}
{A_{n,k}}({\theta _{n,k}}) &= \mathop {\lim }\limits_{t \to \infty } \frac{1}{{{\theta _{n,k}}t}}\log (\mathbb{E}\{ {e^{{\theta _{n,k}}\sum\limits_{i = 0}^t {{a_{n,k}}[i]} }}\} )\\
& = \frac{1}{{{\theta _{n,k}}{T_d}}}\log (\frac{{{p_n}}}{{1 - {\theta _{n,k}}{{\bar L}_{n,k}}}} + 1 - {p_n}).
\end{aligned}
\end{equation}
\subsection{Effective Capacity of Multi-class MTC}
Service for the $k$-th queue at the $n$-th device can have one of the following three states:
\begin{itemize}
  \item \textbf{Successful access state} accounts for the case that all queues of higher priority are empty, while the $k$-th queue is not empty and tries to access to the BS, and the chosen preamble is occupied only by the $n$-th device;
  \item \textbf{Collided access state} accounts for the case that all queues of higher priority are empty, while the $k$-th queue is not empty and tries to access to the BS, and the chosen preamble is occupied by more than one device;
  \item \textbf{Silence state} accounts for the case that more than one queue of higher priority is non-empty, and/or the $k$-th queue does not try to access to the BS.
\end{itemize}

We denote successful access indicator variable as:
\[{s_{n,k}} = \left\{ {\begin{array}{*{20}{l}}
{1,{\rm{if~the~state~is~successful~access~state}}};\\
{0,\rm{otherwise}}.
\end{array}} \right.\]

Let $P_{\text{idle}}^{_{n,k}}$ denote the probability of having an empty buffer at the $k$-th queue of $n$-th device. Then, the probability that the packet at the head of the $k$-th queue of the $n$-th device tries to access the BS is given by
\begin{equation}
P_a^{n,k} = {d_{n,k}}\prod\limits_{j = 1}^{k - 1} {P_{\text{idle}}^{_{n,j}}} (1 - P_{\text{idle}}^{_{n,k}}).
\end{equation}
The activation probability of $n$-th device is
\begin{equation}
{D_n} = \sum\limits_{k = 1}^K {P_a^{n,k}}.
\end{equation}
As a result, the successful access probability of the $k$-th queue of the $n$-th device is given by
\begin{equation} \label{12}
\begin{aligned}
{F_s^{n,k}} &= \sum\limits_{m \in M} {\frac{1}{M}} P_a^{n,k}\prod\limits_{l \ne n,l \in N} {\left( {1 - \frac{D_l}{M}} \right)} \\
 &= P_a^{n,k}\prod\limits_{l \ne n,l \in N} {\left( {1 - \frac{D_l}{M}} \right)}
\end{aligned}
\end{equation}
where $\left( {1 - \frac{D_l}{M}} \right)$ is the probability that the $l$-th device does not operate on the $m$-th preamble, and $M$ is the number of preambles.

For the short packet transmission, the achievable channel coding rate (bits per channel use) of the successful access state depends not only on the channel condition, but also on the finite blocklength $S_{n}$ and the packet error probability $\varepsilon_{n,k}$, which is given by \cite{7980731}
\begin{equation}\label{rate}
\begin{aligned}
{r_{n,k}}\approx & {\log _2}(1 + \frac{{P_{\text{th}}^{n,k}{G_n}{{\left| {{H_n}} \right|}^2}}}{{{N_0}}})\\
&- \!\!\sqrt {\frac{1}{{{S_n}}}\left( {1 - \frac{1}{{{{\left( {1 + \frac{{P_{\text{th}}^{n,k}{G_n}}}{{{N_0}}}{{\left| {{H_n}} \right|}^2}} \right)}^{\rm{2}}}}}} \right)} {Q^{ - 1}}(\varepsilon_{n,k} ){\log _2}(e).
\end{aligned}
\end{equation}
In (\ref{rate}), $Q(x) = \int_x^\infty  {\frac{1}{{\sqrt {2\pi } }}} {e^{\frac{{ - {t^2}}}{2}}}dt$ is the Gaussian Q-function, and ${Q^{ - 1}}(\cdot )$ represents its inverse, $N_0$ is the power of the additive white Gaussian noise, ${P_{\text{th}}^{n,k}}$ is the transmit power of the $k$-th queue of the $n$-th device, $G_n$ is the mean path loss attenuation coefficient (i.e., consisting of free-space loss and shadowing effect) from the $n$-th device to the BS, ${{H_{n}}}$ is the Rayleigh fading coefficient with unit variance, and ${{{\left| {{H_{n}}} \right|}^2}}, n=1,2,...N$ are independent and exponentially distributed random variables with unit mean, i.e., ${\left| {{H_{n}}} \right|^2} \sim\exp (1)$. From (\ref{rate}), we notice that there is a penalty on the maximum achievable rate since the finite blocklength makes the decoding error probability nonnegligible. Note that a packet consists of many payload bits, which are made up of information bits (information payload) and additional bits, containing metadata from the media access-control (MAC) layer and higher layers. The payload bits are typically encoded into a block of $S_n$ data symbols (complex numbers) to increase reliability in packet transmission.

Given the block Rayleigh fading during different transmission superframes, the effective capacity (in bit/s), denoted by $C_{n,k}({\theta _{n,k}},{d_{n,k}},{\bm d_{ - n}},P_{\text{th}}^{n,k})$, specifies the maximum, consistent, steady-state arrival rate at the input of the $k$-th queue of the $n$-th device, under the QoS constraint, given by \cite{EC,7980731}
\begin{equation} \label{EC}
\begin{aligned}
&{C_{n,k}}({\theta _{n,k}},{d_{n,k}},{\bm d_{ - n}},P_{\text{th}}^{n,k}) = \\
& - \frac{1}{{{\theta _{n,k}}{T_u}}}\log \left\{ {{F_s^{n,k}}(1 - {\varepsilon _n}){\mathbb{E}_{{H_n}}}\{ {e^{ - {\theta _{n,k}}{r_{n,k}}{S_n}}}\} } \right.\\
&\left. { + 1 - {F_s^{n,k}}(1 - {\varepsilon _n})} \right\}
\end{aligned}
\end{equation}
In Eq. (7), $\mathbb{E}_{H_n}\{\cdot\}$ is the expectation operation on small-scale channel fading for the $n$-th device, ${\bm{d}_{ - n}} = \left\{ {{{\bm d}_j},j \in \mathcal{N} - n} \right\}$, ${{\bm d}_n} = \left[ {{d_{n,1}},{d_{n,2}}, \ldots {d_{n,K}}} \right]$ is  barring strategy vector for device $n$, ${F_s^{n,k}}(1 - {\varepsilon _n})$ is the successful transmission probability, and $r_{n,k}$ is the achievable transmit rate, which is provided by Eq. (6). The proposed effective capacity is different from that of \cite{7980731}, considering the RACH procedure and ACB model.
\subsection{Composition Results}
If the assumptions of Gartner-Ellis theorem hold \footnote{The Gartner-Ellis theorem assumptions require the function (i) exists for all real $\theta$, (ii) is strictly convex, and (iii) is essentially smooth.} \cite{ChenEC,7061966}  and there is a unique QoS exponent $\theta^*_{n,k}$ that satisfies
\begin{equation} \label{Q1}
{A_{n,k}}({\theta^*_{n,k}}) = {C_{n,k}}({\theta^*_{n,k}},{d_{n,k}},{\bm d_{ - n}},P_{\text{th}}^{n,k})
\end{equation}
then the $k$-th queue of the $n$-th device is stable and $P({Q_{n,k}}[\infty ] \ge Q_{n,k}^{{\rm{th}}})$ can be approximated by \cite{ChenEC}
\begin{equation} \label{Q}
P({Q_{n,k}}[\infty ] \ge Q_{n,k}^{{\rm{th}}}) \approx (1-{P_{\text{idle}}^{_{n,k}}})e^{( - {\theta^*_{n,k}}Q_{n,k}^{{\rm{th}}})}.
\end{equation}
Note that Eq. (8) is the related ties between (2) and (7), which makes the effective capacity equal to the effective bandwidth, in order to satisfy the QoS.

Given a delay bound, $D_{\max}^{n,k}$, the probability that the steady-state packet delay at the $k$-th queue of the $n$-th device exceeds $D_{\max}^{n,k}$ is given by \cite{7061966}
\begin{equation}
{P^{n,k}_{th}} \approx  (1-{P_{\text{idle}}^{_{n,k}}}){e^{ - {\theta^*_{n,k}}{A_{n,k}}({\theta^*_{n,k} }){D^{n,k}_{\max }}}}.
\end{equation}
Given the Bernoulli traffic arrivals, $P_{\text{idle}}^{_{n,k}}$ is approximated as $\theta _{n,k}^*\bar L_{n,k}$ \cite{ChenEC}. The empty buffer probability of the queue can be also estimated by learning the parameters of the network environments.

\begin{thm}
Given QoS exponents $\bm{\theta}  = \left\{ {{\theta_{n,k}},n = 1, \cdots ,N}, k = 1, \cdots ,K  \right\}$, the barring policy $\{ {d_{n,k}},n \in N,k \in K\} $ and the traffic arrival process, the transmit power $P_{\text{th}}^{n,k}$ of the $k$-th queue at the $n$-th device is obtained by solving
\begin{equation} \label{sita1}
\begin{aligned}
&{C_{n,k}}({\theta _{n,k}},{d_{n,k}},{\bm d_{ - n}},P_{\text{th}}^{n,k})\\
&= \frac{1}{{\theta _{n,k}^*{T_d}}}\log (\frac{{{p_n}}}{{1 - {\theta _{n,k}}{{\bar L}_{n,k}}}} + 1 - {p_n}).
\end{aligned}
\end{equation}
\end{thm}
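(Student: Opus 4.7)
The plan is to view this theorem as a direct corollary of the Gartner--Ellis composition result already stated in equation (\ref{Q1}), combined with the closed--form expression for the effective bandwidth in equation (\ref{EB}). Under the stated hypotheses (Bernoulli arrivals with known $p_n,\bar L_{n,k}$, block Rayleigh fading, and a fixed barring strategy), $A_{n,k}(\theta_{n,k})$ is completely determined and depends neither on $P_{\text{th}}^{n,k}$ nor on the access parameters, while $C_{n,k}(\theta_{n,k},d_{n,k},\bm d_{-n},P_{\text{th}}^{n,k})$ depends on $P_{\text{th}}^{n,k}$ only through the finite--blocklength rate $r_{n,k}$ inside the fading expectation in (\ref{EC}).

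First, I would invoke (\ref{Q1}): for the $k$-th queue at the $n$-th device to be stable with QoS exponent $\theta^*_{n,k}$ (so that the queue--length tail bound (\ref{Q}) and the delay bound (10) hold), the effective bandwidth must equal the effective capacity at $\theta^*_{n,k}$. Next, I would substitute the explicit Bernoulli--exponential formula (\ref{EB}) for $A_{n,k}(\theta^*_{n,k})$ into the right--hand side of (\ref{Q1}); this substitution immediately yields the displayed equation (\ref{sita1}) in the theorem statement. All other quantities appearing in this equation ($\theta^*_{n,k}$, $d_{n,k}$, $\bm d_{-n}$, $p_n$, $\bar L_{n,k}$, $T_d$, $T_u$, and the channel/noise parameters entering $r_{n,k}$) are treated as given, so (\ref{sita1}) is indeed an implicit equation in the single unknown $P_{\text{th}}^{n,k}$.

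The one non--trivial step, and the main obstacle, is justifying that (\ref{sita1}) actually determines $P_{\text{th}}^{n,k}$, i.e.\ that the equation admits a solution and (ideally) a unique one. To handle this I would argue monotonicity: $r_{n,k}$ in (\ref{rate}) is increasing in $P_{\text{th}}^{n,k}$ (the Shannon term is strictly increasing, and the finite--blocklength penalty term is decreasing in SNR, so both contributions push $r_{n,k}$ upward), hence the argument of the logarithm in (\ref{EC}), namely $F_s^{n,k}(1-\varepsilon_n)\mathbb{E}_{H_n}\{e^{-\theta^*_{n,k} r_{n,k} S_n}\} + 1 - F_s^{n,k}(1-\varepsilon_n)$, is strictly decreasing in $P_{\text{th}}^{n,k}$, making $C_{n,k}$ strictly increasing in $P_{\text{th}}^{n,k}$. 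Since the right--hand side of (\ref{sita1}) is a finite constant under the assumption $\theta^*_{n,k}\bar L_{n,k}<1$ needed for $A_{n,k}$ to be finite (Gartner--Ellis condition (i)--(iii) footnote), and since $C_{n,k}\to 0$ as $P_{\text{th}}^{n,k}\to 0$ and grows unboundedly as $P_{\text{th}}^{n,k}\to\infty$, the intermediate value theorem gives a unique positive root.

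In summary, the proof reduces to (a) citing the composition identity (\ref{Q1}), (b) plugging in the explicit Bernoulli effective bandwidth (\ref{EB}), and (c) a monotonicity/continuity argument on $C_{n,k}$ as a function of $P_{\text{th}}^{n,k}$ to guarantee that the resulting scalar equation has a well--defined solution. I would not attempt to produce a closed--form expression for $P_{\text{th}}^{n,k}$, since the fading expectation over $|H_n|^2\sim\exp(1)$ combined with the finite--blocklength correction in $r_{n,k}$ precludes one; the theorem only asserts implicit solvability, which is all that downstream sections appear to require.
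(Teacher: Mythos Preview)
Your approach is essentially the paper's: substitute (\ref{EB}) and (\ref{EC}) into the composition identity (\ref{Q1}) to obtain (\ref{sita1}), then argue that $C_{n,k}$ is monotone increasing in $P_{\text{th}}^{n,k}$ so that the scalar equation can be solved (the paper phrases this as ``solvable by bisection'').

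One correction to your monotonicity step: the finite--blocklength penalty term in (\ref{rate}) is \emph{not} decreasing in SNR. The dispersion factor $1-(1+\gamma)^{-2}$ increases with $\gamma=P_{\text{th}}^{n,k}G_n|H_n|^2/N_0$, so the subtracted term grows with power; both the Shannon term and the penalty move in the same direction, and you must show the former dominates. The paper does this by computing $\partial r_{n,k}/\partial P_{\text{th}}^{n,k}$ explicitly and invoking the mild blocklength condition $\sqrt{S_n}\ge Q^{-1}(\varepsilon_{n,k})$ to force the derivative nonnegative. Replace your qualitative sentence with that computation (or at least state the condition) and your argument goes through; the existence/uniqueness embellishment via the intermediate value theorem is fine and slightly more than the paper bothers to write.
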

Note that (\ref{sita1}) is obtained by substituting (\ref{EB}) and (\ref{EC}) into (\ref{Q1}), and can be solved using a bisection method, since ${C_{n,k}}({\theta _{n,k}},{d_{n,k}},{\bm d_{ - n}},P_{\text{th}}^{n,k})$ monotonically increases with $P_{\text{th}}^{n,k}$, the proof of which is provided as follows:

The first-order derivative of (\ref{rate}) with respect to ${P_{\text{th}}^{n,k}}$ is
\begin{equation}
\begin{aligned}
&\frac{{\partial {r_{n,k}}}}{{\partial P_{\text{th}}^{n,k}}} = \frac{{{\sigma _n}}}{{\ln 2(1 + P_{\text{th}}^{n,k}{\sigma _n})}}\\
&\left( {{\rm{1}} - \frac{{{Q^{ - 1}}({\varepsilon _{n,k}})}}{{\sqrt {{S_n}} }}\frac{1}{{\sqrt {{{\left( {1 + P_{\text{th}}^{n,k}{\sigma _n}} \right)}^4} - {{\left( {1 + P_{\text{th}}^{n,k}{\sigma _n}} \right)}^{\rm{2}}}} }}} \right)
\end{aligned}
\end{equation}
where ${\sigma _n} = \frac{{{G_n}{{\left| {{H_n}} \right|}^2}}}{{{N_0}}}$. When the number of OFDM symbols is sufficiently larger, we have $\sqrt {{S_n}}  \ge {Q^{ - 1}}({\varepsilon _{n,k}})$. Thus, $\frac{{\partial {r_{n,k}}}}{{\partial P_{\text{th}}^{n,k}}} \ge {\rm{0}}$. According to (\ref{EC}), we can readily prove that ${C_{n,k}}({\theta _{n,k}},{d_{n,k}},{\bm d_{ - n}},P_{\text{th}}^{n,k})$ increases with $r_{n,k}$. As a result, ${C_{n,k}}({\theta _{n,k}},{d_{n,k}},{\bm d_{ - n}},P_{\text{th}}^{n,k})$ increases with ${P_{\text{th}}^{n,k}}$.

\section{Barring Policy under QoS Guarantee}
\subsection{Distributed Game-Theoretic Barring Policy}
The activation probability of a queue determines the arrival rate and QoS of the traffic. The total performance in terms of effective capacity can be optimized by adaptively allocating the activation probability of all the queues over time based on the channel condition and the service delay requirement. Let ${C_{n,k}}({d_{n,k}},{\bm d_{ - n}}) \buildrel \Delta \over = {C_{n,k}}({\theta _{n,k}},{d_{n,k}},{\bm d_{ - n}},P_{\text{th}}^{n,k})$. As a result, the effective capacity maximization problem with constraints on barring policy is formulated as
\begin{equation}
\begin{aligned}
&(P1):  \mathop {\max }\limits_{\{ {d_{n,k}}\} } \sum\limits_{n \in \mathcal{N}} {\sum\limits_{k \in \mathcal{K}} {{C_{n,k}}} }({{d}_{n,k}},{\bm d_{ - n}}) \\
&s.t.~d_{min} \le {d_{n,k}} \le d_{max}
\end{aligned}
\end{equation}
where $\{ {d_{n,k}}\}$ is the activation probability of $k$-th queue at $n$-th device, and $d_{min}^{k}$ and $d_{max}^{k}$ are the maximum and minimum activation probability of the traffic class $k$, respectively. Due to the non-convexity of the (P1), it is difficult to obtain the optimal solution. Besides, accurate channel information is required for centralized computation, which is cost-ineffective in the mMTC system.

The successful access in its essence is network resource competition (e.g., preambles) among devices. As a result, it can be modeled as a non-cooperative game, which is effective to analyze the interactions among distributed decision makers and to improve the performance of decentralized networks \cite{new2}. Specifically, the priority-queuing-based access class barring policy across different devices is formulated as a non-cooperative game, denoted by ${\cal G} = \left[ {{\cal N},{{\left\{ {{{\bm d}_n}} \right\}}_{n \in {\cal N}}},{{\left\{ {{U_n}} \right\}}_{n \in {\cal N}}}} \right]$, where ${\cal N} = \{ 1, \ldots ,N\} $ is a set of players (i.e., devices), ${{U_n}}$ is a utility (payoff) function of player $n$, as given by
\begin{equation} \label{utility}
{U_n}({{\bm d}_n},{\bm d_{ - n}}) = \sum\limits_{k \in {\cal K}} {{C_{n,k}}\left( {{d_{n,k}},{\bm d_{ - n}}} \right) }.
\end{equation}
In (\ref{utility}), the effective capacity $C_{n,k}\left( {{d_{n,k}},{{\bm d}_{ - n}}} \right)$ can be regarded as the profit from channel access, given by
\begin{equation}
\begin{aligned}
{C_{n,k}}\left( {{d_{n,k}},{{\bm d}_{ - n}}} \right){\rm{ = }}& - \frac{1}{{{\theta _{n,k}}{T_s}}}\log \left\{ {1 - {d_{n,k}}{\Phi _{n,k}}({{\bm d}_{ - n}})} \right\}
\end{aligned}
\end{equation}
where
\begin{equation} \label{fa1}
\begin{aligned}
{\Phi _{n,k}}({{\bm d}_{ - n}}) &= \left( {1 - {E_{{H_n}}}\{ {e^{ - {\theta _{n,k}}{r_{n,k}}{S_{n,k}}}}\} } \right)(1 - {\varepsilon _n})\\
&\prod\limits_{j = 1}^{k - 1} {P_{\text{idle}}^{n,j}} (1 - P_{\text{idle}}^{n,k})\prod\limits_{l \ne n,l \in N} {\left( {1 - {D_l}\frac{1}{M}} \right)}.
\end{aligned}
\end{equation}


Given the game formulation, it is required to prove and derive the Nash equilibrium point. Since $C_{n,k}\left( {{d_{n,k}},{{\bm d}_{ - n}}} \right)$ increases with $d_{n,k}$ and there is no penalty for high activation probability, it is verified that the NE point is $\{ {d_{n,k}} = {d_{\max }}\} $. The NE can be far from the optimal solution of (P1).

In order to improve the efficiency of the game, we design the utility function of player $n$ based on pricing techniques \cite{7155564}, given by
\begin{equation}
{U_n}({\bm x_n},{\bm x_{ - n}}) = \sum\limits_{k \in {\cal K}} {\left[ {{C_{n,k}}\left( {{x_{n,k}},{{\bm x}_{ - n}}} \right) - {\lambda _{n,k}}{x_{n,k}}} \right]}
\end{equation}
where ${d_{n,k}} = 1 - {e^{ - {x_{n,k}}}}$, and $\lambda_{n,k}$ is regarded as a penalty price to prevent devices from always trying to access the channel and causing a heavy collision \cite{8334574}. To this end, we give Lemma 1 and Lemma 2, to prove the existence and uniqueness of NE of the new game $\mathcal{G}' = \left[ {\mathcal{N},{{\left\{ {{{\bm x}_n}} \right\}}_{n \in \mathcal{N}}},{{\left\{ {{U_n}} \right\}}_{n \in \mathcal{N}}}} \right]$.
\begin{lem}
${U_n}({{\bm x}_n},{{\bm x}_{ - n}})$ is a concave function of ${{\bm x}_n}$.
\end{lem}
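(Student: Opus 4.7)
The plan is to reduce concavity of $U_n$ in $\bm{x}_n$ to one-variable concavity of each summand, and then verify that one-variable concavity by a short second-derivative calculation. First I would observe that, with $\bm{x}_{-n}$ held fixed, the utility decomposes as $U_n(\bm{x}_n,\bm{x}_{-n}) = \sum_{k \in \mathcal{K}} [C_{n,k}(x_{n,k},\bm{x}_{-n}) - \lambda_{n,k} x_{n,k}]$, and that the $k$-th summand depends on $\bm{x}_n$ only through its own coordinate $x_{n,k}$: the formula defining $C_{n,k}$ contains the single own-variable $d_{n,k}$ multiplied by $\Phi_{n,k}(\bm{d}_{-n})$, and $\Phi_{n,k}$ is by definition a function of $\bm{x}_{-n}$ alone. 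This makes the Hessian $\nabla^2_{\bm{x}_n} U_n$ diagonal, and reduces the claim to showing that each diagonal entry is non-positive.

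The linear term $-\lambda_{n,k} x_{n,k}$ contributes zero to the diagonal, so the real work is to show that $C_{n,k}(x_{n,k},\bm{x}_{-n})$ is concave in $x_{n,k}$ after the substitution $d_{n,k} = 1 - e^{-x_{n,k}}$. Writing $\Phi := \Phi_{n,k}(\bm{x}_{-n}) \in (0,1)$ and $g(x) := 1 - \Phi + \Phi e^{-x} > 0$, the defining formula becomes $C_{n,k} = -(\theta_{n,k} T_s)^{-1}\log g(x_{n,k})$; a one-line computation using $g'(x) = -\Phi e^{-x}$ and $g''(x) = \Phi e^{-x}$ gives $g g'' - (g')^2 = \Phi(1-\Phi)e^{-x} > 0$, so $(\log g)'' > 0$ and hence the second derivative of $C_{n,k}$ in $x_{n,k}$ is strictly negative.

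The step I regard as the main obstacle — and worth flagging as the real content of the lemma — is noticing that this reparameterisation is doing genuine work. In the original variable $d_{n,k}$, one readily checks $\partial^2 C_{n,k}/\partial d_{n,k}^2 = \Phi^2/[\theta_{n,k} T_s (1 - d_{n,k}\Phi)^2] > 0$, so $C_{n,k}$ is actually \emph{convex} in $d_{n,k}$. Concavity therefore is not a property of the raw utility but a consequence of the change of variables $d = 1 - e^{-x}$, and the non-routine piece of the proof is checking that composing $-\log(1-\Phi d)$ with $d = 1 - e^{-x}$ flips the sign of the second derivative through the identity $g g'' - (g')^2 = \Phi(1-\Phi)e^{-x}$. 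Once this one-variable fact is in hand, summing over $k$ and invoking the diagonal structure of the Hessian yields negative semi-definiteness on $\bm{x}_n$, which is exactly the lemma.
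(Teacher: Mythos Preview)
Your proof is correct and follows essentially the same approach as the paper: both arguments observe that the Hessian in $\bm{x}_n$ is diagonal (the paper states $\partial^2 U_n/\partial x_{n,k}\partial x_{n,k'}=0$) and then verify the diagonal entries are negative by directly computing $\partial^2 U_n/\partial x_{n,k}^2$ and using $0<\Phi_{n,k}<1$. Your use of the identity $gg''-(g')^2=\Phi(1-\Phi)e^{-x}$ is a slightly cleaner packaging of the same calculation the paper carries out explicitly, and your remark that the reparameterisation $d=1-e^{-x}$ is what produces concavity (since $C_{n,k}$ is convex in $d_{n,k}$) is a useful observation the paper does not make.
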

\begin{proof}
The first-order derivative of ${U_n}$ with respect to $x_{n,k}$ is
\begin{equation} \label{derivative}
\frac{{\partial {U_n}}}{{\partial {x_{n,k}}}} = \frac{{{e^{ - {x_{n,k}}}}{\Phi _{n,k}}({{\bm x}_{ - n}})}}{{{\theta _{n,k}}{T_s}\left( {1 - {\Phi _{n,k}}({{\bm x}_{ - n}}) + {e^{ - {x_{n,k}}}}{\Phi _{n,k}}({{\bm x}_{ - n}})} \right)}}-\lambda _{n,k}
\end{equation}
and the second-order derivative of ${U_n}$ with respect to $x_{n,k}$ is
\begin{equation}
\frac{{{\partial ^2}{U_n}}}{{{\partial ^2}{x_{n,k}}}}{\rm{ = }}\frac{{{e^{{x_{n,k}}}}\left( {{\Phi _{n,k}}({{\bm x}_{ - n}}) - 1} \right){\Phi _{n,k}}({{\bm x}_{ - n}})}}{{{\theta _{n,k}}{T_s}{{\left( {{e^{{x_{n,k}}}} + {\Phi _{n,k}}({{\bm x}_{ - n}}) - {e^{{x_{n,k}}}}{\Phi _{n,k}}({{\bm x}_{ - n}})} \right)}^2}}}.
\end{equation}
According to (\ref{fa1}), we have $0 < {\Phi _{\rm{1}}}({{\bm x}_{ - n}}) < 1$, and $\frac{{{\partial ^2}{U_n}}}{{{\partial ^2}{x_{n,k}}}} < 0$. The second-order partial derivative of $U_n$, $\frac{{{\partial ^{\rm{2}}}{U_n}}}{{\partial {x_{n,k}}\partial {x_{n,k'}}}}$, is zero. As a result, the Hessian matrix of $U_n$ is negative. Thus, this concludes the proof.
\end{proof}

\begin{lem}
  ${U_n}({{\bm x}_n},{{\bm x}_{ - n}})$ is a sub-modular function.
\end{lem}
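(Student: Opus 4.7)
The plan is to invoke the standard characterization that a $C^2$ function on a product of real intervals is sub-modular iff every off-diagonal second partial derivative is non-positive. The utility decomposes as $U_n=\sum_{k\in {\cal K}}[C_{n,k}(x_{n,k},\bm x_{-n})-\lambda_{n,k} x_{n,k}]$, in which the $k$-th summand depends on $\bm x_n$ only through $x_{n,k}$. Consequently the within-player cross-partials $\partial^2 U_n/(\partial x_{n,k}\partial x_{n,k'})$ with $k\neq k'$ vanish, as already noted in the proof of Lemma 1. All the remaining work concerns the across-player cross-partials $\partial^2 U_n/(\partial x_{n,k}\partial x_{l,j})$ with $l\neq n$.

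The central manipulation is to rewrite the first-order derivative (\ref{derivative}) as $\partial U_n/\partial x_{n,k}=g(x_{n,k},\Phi_{n,k}(\bm x_{-n}))-\lambda_{n,k}$, where
\begin{equation*}
g(x,\Phi)=\frac{e^{-x}\Phi}{\theta_{n,k} T_s\bigl(1-(1-e^{-x})\Phi\bigr)}.
\end{equation*}
The chain rule then factorises the cross-partial as $\partial^2 U_n/(\partial x_{n,k}\partial x_{l,j})=(\partial g/\partial \Phi)\cdot(\partial \Phi_{n,k}/\partial x_{l,j})$, so sub-modularity reduces to showing that these two factors have opposite signs. A short quotient-rule computation yields $\partial g/\partial \Phi=e^{-x_{n,k}}/[\theta_{n,k}T_s(1-(1-e^{-x_{n,k}})\Phi)^2]>0$, using $0<\Phi_{n,k}<1$ from (\ref{fa1}). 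For the second factor I would inspect (\ref{fa1}) and note that the only sub-factor depending on $x_{l,j}$ is $(1-D_l/M)$; combining $d_{l,j}=1-e^{-x_{l,j}}$ with $D_l=\sum_{k'} d_{l,k'}\prod_{i<k'}P_{\text{idle}}^{l,i}(1-P_{\text{idle}}^{l,k'})$ gives $\partial D_l/\partial x_{l,j}\ge 0$, hence $\partial \Phi_{n,k}/\partial x_{l,j}\le 0$. Together the two signs deliver $\partial^2 U_n/(\partial x_{n,k}\partial x_{l,j})\le 0$, closing the argument.

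The main subtlety I anticipate is not the sign computation but the bookkeeping: one must justify that the empty-buffer probabilities $P_{\text{idle}}^{l,i}$ can be treated as exogenous constants (rather than fresh implicit functions of $\bm x_{-n}$) when differentiating $\Phi_{n,k}$ with respect to $x_{l,j}$. The large-deviation approximation $P_{\text{idle}}^{n,k}\approx \theta_{n,k}^*\bar L_{n,k}$ stated just after (\ref{Q}) licenses this exogeneity, after which the chain rule produces the inequality in a single line.
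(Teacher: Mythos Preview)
Your proposal is correct and follows essentially the same route as the paper: both compute the across-player cross-partial $\partial^2 U_n/(\partial x_{n,k}\partial x_{l,j})$ by differentiating the expression in (\ref{derivative}) with respect to $\Phi_{n,k}$ via the chain rule, obtain a strictly positive prefactor, and then invoke $\partial \Phi_{n,k}/\partial x_{l,j}<0$ from (\ref{fa1}) to conclude sub-modularity. Your write-up is in fact slightly more explicit than the paper's, since you spell out \emph{why} $\partial \Phi_{n,k}/\partial x_{l,j}\le 0$ (via the $(1-D_l/M)$ factor) and flag the exogeneity of the $P_{\text{idle}}$ terms, whereas the paper simply asserts the sign ``according to (\ref{fa1}).''
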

\begin{proof}
  The second-order derivative of $U_n$ with respect to $x_{n,k}$ and $x_{j,k'}$, $j \ne n$, is given by
  \begin{equation}
  \begin{aligned}
  \frac{{\partial {U_n}}}{{\partial {x_{n,k}}\partial {x_{j,k'}}}} = \frac{\partial }{{\partial {x_{j,k'}}}}\left( {\frac{{{e^{ - {x_{n,k}}}}}}{{{\theta _{n,k}}{T_s}\left( {\frac{1}{{{\Phi _{n,k}}({{\bm x}_{ - n}})}} - 1 + {e^{ - {x_{n,k}}}}} \right)}}} \right)\\
   = \frac{{{e^{ - {x_{n,k}}}}}}{{{\theta _{n,k}}{T_s}{{\left( {\frac{1}{{{\Phi _{n,k}}({{\bm x}_{ - n}})}} - 1 + {e^{ - {x_{n,k}}}}} \right)}^2}}}\frac{1}{{{\Phi _{n,k}}{{({{\bm x}_{ - n}})}^2}}}\frac{{\partial {\Phi _{n,k}}({{\bm x}_{ - n}})}}{{\partial {x_{j,k'}}}}.
  \end{aligned}
  \end{equation}
  According to (\ref{fa1}), we have $\frac{{\partial {\Phi _{n,k}}({{\bm x}_{ - n}})}}{{\partial {x_{j,k'}}}} < 0$. As a result, $\frac{{{\partial ^2}{U_n}}}{{\partial {x_{n,k}}\partial {x_{j,k'}}}} < 0$. Since ${U_n}({{\bm x}_n},{\bm x_{ - n}})$ is twice differentiable, and $\frac{{{\partial ^2}{U_n}}}{{\partial {x_{n,k}}\partial {x_{j,k'}}}} < 0$, the utility function ${U_n}({{\bm x}_n},{{\bm x}_{ - n}})$ is a sub-modular function \cite{7470936}.
\end{proof}
Informally, the sub-modularity of the game implies that the best-response of player $n$ is a non-increasing function of another players' decision \cite{7470936}. The strict mathematic proof is given in the following.

\begin{thm}[NE Existence]
 The game $\mathcal{G}' = \left[ {\mathcal{N},{{\left\{ {{{\bm x}_n}} \right\}}_{n \in \mathcal{N}}},{{\left\{ {{U_n}} \right\}}_{n \in \mathcal{N}}}} \right]$ has at least one pure-strategy Nash equilibrium $\{ x_{n,k}^*,\forall n \in \mathcal{N},\forall k \in \mathcal{K}\}$.
\end{thm}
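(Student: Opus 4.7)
The plan is to invoke the classical Debreu--Glicksberg--Fan existence theorem, which guarantees a pure-strategy Nash equilibrium for any non-cooperative game whose strategy sets are non-empty, compact, convex subsets of a Euclidean space, whose payoffs are continuous on the joint strategy profile, and quasi-concave in each player's own strategy.

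First, I would pin down the strategy space of player $n$. The original barring constraint $d_{\min}\le d_{n,k}\le d_{\max}$ translates, under the change of variable $d_{n,k}=1-e^{-x_{n,k}}$, into the closed interval $x_{n,k}\in[-\log(1-d_{\min}),\,-\log(1-d_{\max})]$. Taking the $K$-fold Cartesian product, the admissible set $\mathcal{X}_n\subset\mathbb{R}^K$ is a non-empty, compact, convex box. Hence the joint strategy space $\prod_{n\in\mathcal{N}}\mathcal{X}_n$ is also non-empty, compact, and convex.

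Second, I would establish continuity of $U_n$ on the joint strategy profile. Since $\Phi_{n,k}(\bm x_{-n})$ in (\ref{fa1}) is a product of exponentials, polynomials, and constants in the variables $\{x_{j,k'}\}_{j\ne n}$, and since by construction $0<\Phi_{n,k}(\bm x_{-n})<1$ and $1-\Phi_{n,k}(\bm x_{-n})+e^{-x_{n,k}}\Phi_{n,k}(\bm x_{-n})>0$ throughout $\mathcal{X}_n$, the logarithm in $C_{n,k}$ is well-defined and continuous, so $U_n$ is jointly continuous. Third, concavity (and hence quasi-concavity) of $U_n$ in its own argument $\bm x_n$ has already been proved in Lemma~1 via the negative-semidefiniteness of the Hessian.

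With these three conditions verified, the Debreu--Glicksberg--Fan theorem applies directly, yielding at least one pure-strategy Nash equilibrium $\{x_{n,k}^{*}\}$. The only mildly delicate step is the continuity/well-posedness of $U_n$ at the boundary of the strategy box, but this is harmless because the constraints $d_{\min}>0$ and $d_{\max}<1$ keep $\Phi_{n,k}$ bounded away from $0$ and $1$ uniformly, so the logarithm's argument never degenerates. I would not expect to need the sub-modularity of Lemma~2 for existence itself; that property is better suited to proving uniqueness and to driving convergence of the best-response dynamics in the subsequent algorithm.
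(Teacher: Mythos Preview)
Your proposal is correct and follows essentially the same route as the paper: verify that each player's strategy set $[x_{\min},x_{\max}]^K$ is non-empty, compact, and convex, note that $U_n$ is continuous in the joint profile, invoke Lemma~1 for concavity in $\bm x_n$, and then apply the Debreu--Glicksberg--Fan/Nash existence theorem (the paper cites it as Theorem~3.2 in \cite{han2012game}). Your additional care about well-posedness of the logarithm and your remark that Lemma~2 is unnecessary for existence are both accurate and consistent with the paper's argument.
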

\begin{proof}
For every player, $d_{n,k}$ is within the interval $[d_{\min}, d_{\max}]$. Thus, we have $- \ln (1 - {d_{\min }}) \le {x_{n,k}} \le  - \ln (1 - {d_{\max }})$. Let $x_{\min} = - \ln (1 - {d_{\min }})$ and ${x_{\max }} =  - \ln (1 - {d_{\max }})$. Obviously, set ${x_{n,k}} \in [{x_{\min }},{x_{\max }}]$ is compact and convex. The utility function ${U_n}({{\bm x}_n},{x_{ - n}})$ is a continuous function of probability vector ${{\bm x}_n}$. In addition, ${U_n}({{\bm x}_n},{\bm x_{ - n}})$ is a concave function of ${{\bm x}_n}$ (See Lemma 1). According to theorem 3.2 in \cite{han2012game}, we prove the existence of Nash equilibrium.
\end{proof}
\begin{thm}
The best-response function of player $n$ is given by
\begin{equation} \label{bestresponse}
\begin{aligned}
x_{n,k} &= b_{n,k}({\bm x_{-n}}) \\ &=\left[ {\ln \!\left( {\frac{1}{{{\lambda _{n,k}}{\theta _{n,k}}{T_s}}}\! -\! 1\!} \right) \!\!- \!\! \ln \!\left( {\frac{1}{{{\Phi _{n,k}}({{\bm x}_{ - n}})}} \!-\! 1\!} \right)} \right]_{{x_{\min }}}^{{x_{\max }}}
\end{aligned}
\end{equation}
and is a decreasing function of ${\bm x}_{-n}$, where $[x]_{{x_{\min }}}^{{x_{\max }}} = \min \left[ {\max \left( {x,{x_{\min }}} \right),{x_{\max }}} \right]$.
\end{thm}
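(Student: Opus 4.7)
The plan is to characterize the best response as the maximizer of the concave function $U_n(\bm x_n, \bm x_{-n})$ over the box $[x_{\min},x_{\max}]^K$ and then use the explicit first-order condition to solve for $x_{n,k}$ in closed form. Lemma 1 already gives concavity of $U_n$ in $\bm x_n$, and the off-diagonal cross derivatives computed in the proof of Lemma 2 vanish, so each coordinate $x_{n,k}$ can be optimized independently. Hence $b_{n,k}(\bm x_{-n})$ is the projection onto $[x_{\min},x_{\max}]$ of the unique stationary point of $U_n(\cdot,\bm x_{-n})$ in the $k$-th coordinate.

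Next, I would set the first-order derivative in (\ref{derivative}) to zero and solve algebraically for $x_{n,k}$. Abbreviating $\Phi := \Phi_{n,k}(\bm x_{-n})$ and $\mu := \lambda_{n,k}\theta_{n,k}T_s$, the stationarity equation $e^{-x_{n,k}}\Phi = \mu\bigl(1-\Phi+e^{-x_{n,k}}\Phi\bigr)$ rearranges to $e^{-x_{n,k}}\Phi(1-\mu) = \mu(1-\Phi)$, so
\begin{equation}
e^{-x_{n,k}} = \frac{\mu(1-\Phi)}{\Phi(1-\mu)},
\end{equation}
and taking logs yields $x_{n,k} = \ln\bigl(\tfrac{1}{\mu}-1\bigr) - \ln\bigl(\tfrac{1}{\Phi}-1\bigr)$, which is exactly (\ref{bestresponse}) before projection. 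Since $U_n$ is concave in $x_{n,k}$ (Lemma 1), this interior stationary point is the unconstrained maximizer, and the box projection $[\cdot]_{x_{\min}}^{x_{\max}}$ delivers the constrained best response.

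For the monotonicity claim, I would argue directly from the structure of $\Phi_{n,k}(\bm x_{-n})$ in (\ref{fa1}). The change of variables $d_{l,m}=1-e^{-x_{l,m}}$ is increasing in $x_{l,m}$, so each $D_l$ is non-decreasing in every component of $\bm x_{-n}$. Since $\Phi_{n,k}$ contains the product $\prod_{l\ne n}(1-D_l/M)$ and no component of $\bm x_{-n}$ appears elsewhere, $\Phi_{n,k}$ is non-increasing in $\bm x_{-n}$. Consequently $\tfrac{1}{\Phi_{n,k}}-1$ is non-decreasing, its logarithm is non-decreasing, and the minus sign in front makes the unprojected best response non-increasing in $\bm x_{-n}$. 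Projection onto $[x_{\min},x_{\max}]$ is a monotone operation, so $b_{n,k}(\bm x_{-n})$ is itself decreasing, which is also consistent with the sub-modularity of $U_n$ established in Lemma 2.

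There is no deep obstacle: the work is essentially solving a scalar equation and tracking signs through the product structure of $\Phi_{n,k}$. The only point requiring a little care is the sign of $1-\mu$, i.e., whether $\lambda_{n,k}\theta_{n,k}T_s<1$, which must be assumed (or the projection will clip to $x_{\min}$); I would state this as a standing assumption on the price $\lambda_{n,k}$ so that the argument of the first logarithm in (\ref{bestresponse}) is positive and the closed form is well defined.
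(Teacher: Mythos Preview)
Your proposal is correct and follows essentially the same route as the paper: set the first-order condition (\ref{derivative}) to zero, solve the resulting scalar equation for $e^{-x_{n,k}}$, take logarithms, and project onto $[x_{\min},x_{\max}]$; for monotonicity, the paper differentiates $b_{n,k}$ with respect to $x_{j,k'}$ and invokes $\partial\Phi_{n,k}/\partial x_{j,k'}<0$, which is the calculus version of your structural argument through $D_l$ and the product $\prod_{l\ne n}(1-D_l/M)$. Your treatment is in fact a bit more careful than the paper's in two respects: you explicitly invoke separability (vanishing cross-partials from Lemma~1) to justify coordinate-wise optimization, and you flag the standing assumption $\lambda_{n,k}\theta_{n,k}T_s<1$ needed for the first logarithm in (\ref{bestresponse}) to be defined, which the paper leaves implicit.
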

\begin{proof}
Intuitively, the best-response function of player $n$ is the point at which the derivative of ${U_n}({{\bm x}_n},{{\bm x}_{ - n}})$ with respect to $x_{n,k}$ equals zero. According to (\ref{derivative}), it can be calculated by
\begin{equation} \label{zero}
\frac{{{e^{ - {x_{n,k}}}}{\Phi _{n,k}}({{\bm x}_{ - n}})}}{{{\theta _{n,k}}{T_s}\left( {1 - {\Phi _{n,k}}({{\bm x}_{ - n}}) + {e^{ - {x_{n,k}}}}{\Phi _{n,k}}({{\bm x}_{ - n}})} \right)}} = {\lambda _{n,k}}.
\end{equation}
After some algebraic manipulation, (\ref{bestresponse}) can be obtained, which decreases with ${\bm x}_{ - n}$. This is because
\[\frac{{\partial b_{n,k}({\bm x_{-n}})}}{{\partial {x_{j,k'}}}} = \frac{1}{{{\Phi _{n,k}}({{\bm x}_{ - n}}) - {\Phi _{n,k}}{{({{\bm x}_{ - n}})}^2}}}\frac{{\partial {\Phi _{n,k}}({{\bm x}_{ - n}})}}{{\partial {x_{j,k'}}}}.\]
where $0 < {\Phi _{n,k}}({{\bm x}_{ - n}}) < 1$ and $\frac{{\partial {\Phi _{n,k}}({{\bm x}_{ - n}})}}{{\partial {x_{j,k'}}}} < 0$, and we have $\frac{{\partial b_{n,k}({\bm x_{-n}})}}{{\partial {x_{j,k'}}}} < 0$. The proof is accomplished.
\end{proof}

Note that Theorem 3 provides the best-response function of device $n$, which needs its information (including the penalty price $\lambda _{n,k}$, the QoS requirement $\theta _{n,k}$, and the channel condition $H_n$), and also needs exchange information (more specifically, the activation probability ${\vec x_{-n}}$) to calculate $\Phi _{n,k}({{\vec x}_{ - n}})$. Assume that $x_{n,k}$ can be quantified by $s$ bits. As a result, the total overhead at a iteration could has $KNs$. To further reduce the overhead, we propose the following fully distributed algorithm.

According to Lemma 2 and Theorem 3, a greedy updating algorithm is developed to iteratively search the Nash equilibrium point. At iteration $t$, each player chooses the best strategy in respond to the opponent strategy chosen in iteration $t-1$. Note that the iteration times do not correspond to the time slots of our proposed protocol. The duration of iterations can be very smaller than the superframe duration, as long as the exchange information is delivered. For the game of the minimization of sub-modular cost function, the greedy updating algorithm is proved to converge monotonically to an equilibrium (the equilibrium depends on the initial state of the algorithm) \cite{1198610,topkis1979equilibrium}. However, the conclusion cannot fit to the maximization of sub-modular utility function. In the following, we prove the convergence of the algorithm in the case of maximization of sub-modular utility function, and the uniqueness of the equilibrium.

\begin{thm} [Convergence and Uniqueness]
  Consider a non-cooperative game,  where the payoff functions are sub-modular, players maximize the objective function, and the strategy spaces are continuous and compact. Then, the following hold:
  \begin{enumerate}
    \item If player $n$ initially uses the policy satisfying ${x_{n,k}^0} \le b_{n,k}({\bm x_{ - n}^*})$ and ${x_{n,k}^0} < {x_{n,k}^2}$ (${x_{n,k}^0} \ge b_{n,k}({\bm x_{ - n}^*})$ and ${x_{n,k}^0} > {x_{n,k}^2}$), $\forall n \in \mathcal{N},\forall k \in \mathcal{K}$, the sequence ${\{ x_{n,k}^t\} _{t = 0,1,2, \ldots }}$ generated by the greedy updating algorithm converges to an Nash equilibrium $\{ x_{n,k}^*,\forall n \in \mathcal{N},\forall k \in \mathcal{K}\}$.

    \item The Nash equilibrium $\{ x_{n,k}^*,\forall n \in \mathcal{N},\forall k \in \mathcal{K}\}$ is unique.

  \end{enumerate}
\end{thm}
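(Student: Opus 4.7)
The plan is to exploit the fact, established in Theorem~3, that every best-response component $b_{n,k}({\bm x}_{-n})$ is \emph{decreasing} in each of the opponents' coordinates, so that the two-step composition $B := b\circ b$ is coordinate-wise non-decreasing on the compact box $[x_{\min},x_{\max}]^{NK}$. The greedy iterates $\{{\bm x}^t\}$ form the orbit of $b$ starting from ${\bm x}^0$; I will split them into even- and odd-indexed subsequences and show that each is monotone under the stated initial conditions.

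For convergence (part 1), I would first use the hypothesis $x_{n,k}^0\le b_{n,k}({\bm x}_{-n}^*)=x_{n,k}^*$, together with the decreasing property of $b$, to deduce inductively that $x_{n,k}^{2t}\le x_{n,k}^*\le x_{n,k}^{2t+1}$ for all $t$. The second hypothesis ${\bm x}^0<{\bm x}^2$ combined with the monotonicity of $B$ then gives ${\bm x}^{2t}<{\bm x}^{2(t+1)}$ for every $t$, and applying $b$ once reverses the inequality to produce ${\bm x}^{2t+1}>{\bm x}^{2(t+1)+1}$. Hence the even subsequence is non-decreasing and bounded above by ${\bm x}^*$, and the odd subsequence is non-increasing and bounded below by ${\bm x}^*$; both therefore converge to limits $\underline{\bm x}\le{\bm x}^*\le\overline{\bm x}$. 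Continuity of $b$, evident from the closed form (20), then gives $b(\underline{\bm x})=\overline{\bm x}$ and $b(\overline{\bm x})=\underline{\bm x}$, so $\underline{\bm x}$ and $\overline{\bm x}$ are both fixed points of $B$. The symmetric case in which the initial condition sits above ${\bm x}^*$ is handled by the same argument with the roles of the two subsequences swapped.

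For uniqueness (part 2), suppose ${\bm x}^*$ and ${\bm y}^*$ are two Nash equilibria of $\mathcal{G}'$; both would then be fixed points of $B$, alongside the limits $\underline{\bm x},\overline{\bm x}$ from part 1. The entire theorem therefore reduces to showing that $B$ has a unique fixed point on $[x_{\min},x_{\max}]^{NK}$, because that forces $\underline{\bm x}=\overline{\bm x}={\bm x}^*={\bm y}^*$, simultaneously closing the uniqueness claim and collapsing both monotone subsequences in part~1 onto ${\bm x}^*$. To get this I would leverage the explicit form (20), which writes $b_{n,k}$ as an affine function of $\log(\Phi_{n,k}^{-1}-1)$, together with the product structure (16) of $\Phi_{n,k}$ in the opponents' variables. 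Starting from the cross-partial already computed in (21), the aim is a strict diagonal-dominance bound $\sum_{(j,k')\neq(n,k)}|\partial b_{n,k}/\partial x_{j,k'}|<1$ holding uniformly on the feasible box; this makes $b$, and hence $B$, a contraction in the sup norm, pinning down a single fixed point.

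The main obstacle is precisely this diagonal-dominance estimate. Sub-modularity and compactness alone cannot exclude two-cycles of $b$, so one genuinely needs quantitative control over the cross-partials. The prefactor $1/[\Phi_{n,k}(1-\Phi_{n,k})]$ appearing in (21) blows up as $\Phi_{n,k}\to 0$ or $1$, so the estimate must exploit the box constraints $d_{\min}\le d_{n,k}\le d_{\max}$ to keep $\Phi_{n,k}$ uniformly away from the endpoints, and then use the $1/M$ factors coming from $\prod_{l\neq n}(1-D_l/M)$ in (16), with $M$ scaling with the number of contending devices, to dominate the sum over opponents. I expect this to be the technically most delicate part of the argument, whereas the monotone-subsequence step above is essentially immediate from Topkis-type reasoning on $b\circ b$.
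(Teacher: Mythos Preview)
Your treatment of part~1 is essentially the paper's argument: both split the greedy orbit into even and odd subsequences, use the decreasing best response to sandwich them on opposite sides of any equilibrium $\bm x^*$, and then invoke the condition $\bm x^0<\bm x^2$ to force monotonicity of each subsequence. You are in fact more careful than the paper in noting that the two monotone limits $\underline{\bm x},\overline{\bm x}$ are a~priori only a $2$-cycle of $b$, so that convergence to $\bm x^*$ itself still hinges on ruling out nontrivial fixed points of $B=b\circ b$.

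Where you diverge from the paper is precisely in how you propose to rule those out. The paper does \emph{not} attempt any contraction or diagonal-dominance estimate; instead it runs a short scaling contradiction in the style of standard-interference-function arguments: given two equilibria $\bm x^*\neq\bm x^{**}$, it picks $\alpha>1$ so that $\alpha\bm x^{**}\ge\bm x^*$ with equality in at least one coordinate $(n,k)$ and strict inequality in at least one other, and then compares $\alpha\,b_{n,k}(\bm x_{-n}^{**})=b_{n,k}(\bm x_{-n}^*)$ with $b_{n,k}(\alpha\bm x_{-n}^{**})<b_{n,k}(\bm x_{-n}^*)$ (monotonicity) to obtain a contradiction. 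This route uses only the decreasing property of $b_{n,k}$ already established in Theorem~3 and needs no quantitative control of the cross-partials.

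Your contraction plan, by contrast, imports an extra structural assumption: the bound $\sum_{(j,k')\neq(n,k)}|\partial b_{n,k}/\partial x_{j,k'}|<1$ is not implied by sub-modularity and compactness alone, and as you yourself note, the prefactor $[\Phi_{n,k}(1-\Phi_{n,k})]^{-1}$ can be large near the endpoints. So your uniqueness argument would, at best, prove a narrower theorem that requires $M$ large relative to $N$ and $d_{\min},d_{\max}$ suitably placed. That is a genuine gap relative to the statement as given. The fix is to replace the contraction step with the paper's scaling argument, which closes both the uniqueness claim and the identification $\underline{\bm x}=\overline{\bm x}=\bm x^*$ you need in part~1, without any additional hypotheses.
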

\begin{proof}
  We first prove item 1). According to the existence theorem, there is a Nash equilibrium, denoted by $\{ x_{n,k}^*,\forall n \in \mathcal{N},\forall k \in \mathcal{K}\}$, satisfying $x_{n,k}^* = b_{n,k}(\bm x_{ - n}^*)$. The initial point is assume to satisfy $x_{n,k}^0 \le b_{n,k}(\bm x_{ - n}^*),\forall n \in {\cal N},\forall k \in {\cal K}$. Due to decreasing property of the best-response function,  we have
  \begin{equation} \label{proof1}
  \begin{array}{l}
    x_{n,k}^0 \le b_{n,k}(\bm x_{ - n}^*) = \bm x_{ - n}^*,\forall n \in {\cal N},\forall k \in {\cal K}\\
    x_{n,k}^1 = b_{n,k}(\bm x_{ - n}^0) \ge b_{n,k}(\bm x_{ - n}^*) = \bm x_{ - n}^*,\forall n \in {\cal N},\forall k \in {\cal K}\\
    x_{n,k}^2 = b_{n,k}(\bm x_{ - n}^1) \le b_{n,k}(\bm x_{ - n}^*) = \bm x_{ - n}^*,\forall n \in {\cal N},\forall k \in {\cal K}\\
    x_{n,k}^3 = b_{n,k}(\bm x_{ - n}^2) \ge b_{n,k}(\bm x_{ - n}^*) = \bm x_{ - n}^*,\forall n \in {\cal N},\forall k \in {\cal K}\\
     \cdots \\
    x_{n,k}^{2t} = b_{n,k}(\bm x_{ - n}^{2t - 1}) \le b_{n,k}(\bm x_{ - n}^*) = \bm x_{ - n}^*,\forall n \in {\cal N},\forall k \in {\cal K}\\
    x_{n,k}^{2t + 1} = b_{n,k}(\bm x_{ - n}^{2t}) \ge b_{n,k}(\bm x_{ - n}^*) = \bm x_{ - n}^*,\forall n \in {\cal N},\forall k \in {\cal K}.
\end{array}
\end{equation}
   As a result, we can observe that the sequence ${\{ x_{n,k}^t - x_{n,k}^*\} _{t = 0,1,2, \ldots }},\forall n \in \mathcal{N},\forall k \in \mathcal{K},$ is an alternating sequence whose terms alternate in sign. Stated otherwise, the original sequence ${\{ x_{n,k}^t\} _{t = 0,1,2, \ldots }}$ oscillates at the Nash equilibrium $\{ x_{n,k}^*\}$. In addition, we can observe that if $x_{n,k}^2 - x_{n,k}^0 > 0$, the following inequations are satisfied:
  \begin{equation} \label{proof2}
    \begin{array}{l}
    x_{n,k}^3 - x_{n,k}^1 = b_{n,k}(\bm x_{ - n}^2) - b_{n,k}(\bm x_{ - n}^0) < 0,\forall n \in {\cal N},\forall k \in {\cal K}\\
    x_{n,k}^4 - x_{n,k}^2 = b_{n,k}(\bm x_{ - n}^3) - b_{n,k}(\bm x_{ - n}^1) > 0,\forall n \in {\cal N},\forall k \in {\cal K}\\
     \cdots \\
    x_{n,k}^{2t - 1}\!\! -\!\! \bm x_{ - n}^{2t - 3} \!\!=\!\! b_{n,k}(\bm x_{ - n}^{2t - 2}) - b_{n,k}(\bm x_{ - n}^{2t - 4}) < 0,\forall n \in {\cal N},\forall k \in {\cal K}\\
    x_{n,k}^{2t}\!\! -\!\! \bm x_{ - n}^{2t - 2} \!\!= \!\!b_{n,k}(\bm x_{ - n}^{2t - 1}) - b_{n,k}(\bm x_{ - n}^{2t - 3}) > 0,\forall n \in {\cal N},\forall k \in {\cal K}.
\end{array}
\end{equation}
According to (\ref{proof1}) and (\ref{proof2}), the sub-sequence ${\{ x_{n,k}^{2t}\} _{t = 0,1,2, \ldots }},\forall n \in \mathcal{N},\forall k \in \mathcal{K},$ increases with $t$, and the upper bound is $\{ x_{n,k}^*\}$. Similarly, the sub-sequence ${\{ x_{n,k}^{2t-1}\} _{t = 1,2, \ldots }},\forall n \in \mathcal{N},\forall k \in \mathcal{K},$ decreases with $t$, and the lower bound is $\{ x_{n,k}^*\}$. As a result, there is a positive integer $t$ such that for $\forall n \in \mathcal{N},\forall k \in \mathcal{K}$, we have $\left| {x_{n,k}^t - \bm x_{ - n}^{t - 1}} \right| \le \zeta $, where $\zeta$ is a small positive real number and the vertical bars denote the absolute value. Stated otherwise, the original sequence ${\{ x_{n,k}^t\} _{t = 0,1,2, \ldots }}$ is a Cauchy sequence, which converges to the Nash equilibrium.

As discussed precedingly, the convergence of the algorithm needs to be guaranteed, satisfying that $x_{n,k}^2 - x_{n,k}^0 > 0$. Thus, we redesign the iteration, given by
\begin{equation}
x_{n,k}^2 = \left\{ {\begin{array}{*{20}{c}}
{b_{n,k}(\bm x_{ - n}^1),{\rm{when }}~~b_{n,k}(\bm x_{ - n}^1) > x_{n,k}^0}\\
{x_{n,k}^0 + \delta ,{\rm{when }}~~b_{n,k}(\bm x_{ - n}^1) \le x_{n,k}^0}
\end{array}} \right.
\end{equation}
where $\delta$ is a small positive real number. For the case of ${x_{n,k}^0} \ge b_{n,k}({\bm x_{ - n}^*})$ and ${x_{n,k}^0} > {x_{n,k}^2}$, similar proof can be obtained.

Next, we proceed to prove item 2). Suppose $\{ x_{n,k}^*\} $ and $\{ x_{n,k}^{**}\} $ are distinct Nash equilibriums, satisfying $x_{n,k}^{*} = b_{n,k}\left( {\bm x_{ - n}^{*}} \right)$ and $x_{n,k}^{**} = b_{n,k}\left( {\bm x_{ - n}^{**}} \right)$, respectively. Without loss of generality, we assume there exist $n$ and $k$ such that $x_{n,k}^{**} < x_{n,k}^*$. Hence, there exists $\alpha  > 1$ such that for some $n$ and $k$, $\alpha x_{n,k}^{**} > x_{n,k}^*$ and that for other $n$ and $k$, $\alpha x_{n,k}^{**} = x_{n,k}^*$. Thus, with $n$ and $k$ satisfying $\alpha x_{n,k}^{**} = x_{n,k}^*$, we have
\begin{equation}
\alpha b_{n,k}\left( {x_{n,k}^{**}} \right) = b_{n,k}\left( {x_{n,k}^*} \right).
\end{equation}
Since there exists at least one $n'$ and $k'$ such that $\alpha x_{n',k'}^{**} > x_{n',k'}^*$, we have $\alpha \bm x_{ - n}^{**} > \bm x_{ - n}^*$. According to the decreasing property of the best-response function, we have
\begin{equation}
b_{n,k}\left( {\alpha \bm x_{ - n}^{**}} \right) < b_{n,k}\left( {\bm x_{ - n}^*} \right).
\end{equation}
Due to contradiction, we infer the uniqueness of Nash equilibrium.
\end{proof}

As the algorithm requires all the devices' information, we propose the following fully distributed algorithm. Substituting (\ref{12}) into (\ref{fa1}), we have
\begin{equation} \label{dis}
{\Phi _{n,k}}({\bm x_{ - n}}) = \left( {1 - {\mathbb{E}_{{H_n}}}\{ {e^{ - {\theta _{n,k}}{r_{n,k}}{S_{n,k}}}}\} } \right)\frac{{\tilde F_s^{n,k}}}{{1 - {e^{ - {x_{n,k}}}}}}
\end{equation}
where ${\tilde F_s^{n,k}} = (1 - {\varepsilon _n})F_s^{n,k}$ is an estimated successful access probability, and can be obtained by observing the environment, i.e., counting the acknowledgement (ACK) sent by the BS \cite{new7}. In this way, we do not need to estimate the empty buffer probability and exchange the information with other devices. The channel state information (CSI) can be  estimated at the BS, and usually quantized and fed back to the device. Substituting (\ref{dis}) into (\ref{bestresponse}), we obtain a fully distributed best-response function, requiring local information, such as current channel state and the estimated successful access probability. Thus, each device iterates by itself using the fully distributed best-response function, instead of exchanging information among devices. The proposed algorithm is explicitly described in Algorithm 1.
\begin{algorithm}[t]
  \caption{A distributed best-response-based algorithm for the $k$-th queue of the $n$-th device}
  \label{Algorithm1}
  \begin{algorithmic}[1]
  \STATE Initialize $x_{n,k}[0]$, $t=0$ and the error $\xi$.
  \REPEAT
  \STATE Estimate current channel state and the successful access probability. Update ${\Phi _{n,k}}({\bm x_{ - n}})$ using (\ref{dis}).
  \STATE $t \leftarrow t + 1$.
  \STATE Update $x_{n,k}[t]$ using (\ref{bestresponse}).
  \UNTIL{$\left| {{x_{n,k}}[t] - {x_{n,k}}[t - 1]} \right| \le \xi$}
  \end{algorithmic}
\end{algorithm}
\subsection{Local Optimal Price}
The price selection $\lambda_{n,k}$ is of importance to maximize the effective capacity. A smaller $\lambda_{n,k}$ implies frequent access attempts, making the signalling more conflicted. However, a larger $\lambda_{n,k}$ prevents more devices from accessing the BS. Hence, an appropriate price is necessary to the game.

The local optimal price can be obtained by comparing the Lagrangian of the centralized optimization problem (P1) and that of the maximization of each player's utility. The problem can be reformulated as
\begin{equation}
\begin{aligned}
&(P2):\mathop {\min}\limits_{\{ {x_{n,k}}\} } -\sum\limits_{n \in {\cal N}} {\sum\limits_{k \in {\cal K}} {{C_{n,k}}\left( {{x_{n,k}},{{\bm x}_{ - n}}} \right)} } \\
&s.t.\;{x_{min}} \le {x_{n,k}} \le {x_{max}}.
\end{aligned}
\end{equation}
The Lagrangian function of (P2) is given by
\begin{equation}
\begin{aligned}
{L_1} = &\sum\limits_{n \in {\cal N}} {\sum\limits_{k \in {\cal K}} {\left\{ { - {C_{n,k}}\left( {{x_{n,k}},{{\bm x}_{ - n}}} \right) + ({\sigma _{n,k}} - {\nu _{n,k}}){x_{n,k}}} \right.} } \\
&~~~~~~~~~~~\left. { + {\nu _{n,k}}{x_{min}} - {\sigma _{n,k}}{x_{max}}} \right\}
\end{aligned}
\end{equation}
where ${\sigma _{n,k}}$ and ${\nu _{n,k}}$ are the Lagrange multipliers for the constraints. Let ${\bm x^{opt}}$ be the optimal device activation probability. The first order KKT condition is given by
\begin{equation} \label{KKT1}
\begin{aligned}
{\left. {\frac{{\partial {L_1}}}{{\partial {x_{n,k}}}}} \right|_{\bm x = {\bm x^{opt}}}} =&  - {\left. {\frac{{\partial {C_{n,k}}\left( {{x_{n,k}},{{\bm x}_{ - n}}} \right)}}{{\partial {x_{n,k}}}}} \right|_{\bm x = {\bm x^{opt}}}} + \sigma _{n,k} - \nu _{n,k}\\
& - \!\!\!\!\!\!\!\!{\left. {\sum\limits_{m \ne n,m \in {\cal N}} {\sum\limits_{b \in {\cal K}} {\frac{{\partial {C_{m,b}}\left( {{x_{m,b}},{{\bm x}_{ - m}}} \right)}}{{\partial {x_{n,k}}}}} } } \right|_{\bm x = {\bm x^{opt}}}} =0.
\end{aligned}
\end{equation}
The maximization of each player's utility can be formulated as
\begin{equation}
\begin{aligned}
(P3):\mathop {\max }\limits_{\{ {x_{n,k}}\} } {U_n}\left( {{x_{n,k}},{{\bm x}_{ - n}}} \right)\\
s.t.\;{x_{min}} \le {x_{n,k}} \le {x_{max}}.
\end{aligned}
\end{equation}
The Lagrangian function of (P3) is given by
\begin{equation}
\begin{aligned}
{L_2} = &\sum\limits_{k \in {\cal K}} {\left\{ { - \left( {{C_{n,k}} - {\lambda _{n,k}}{x_{n,k}}} \right) + ({{\sigma '}_{n,k}} - {{\nu '}_{n,k}}){x_{n,k}}} \right.} \\
&~~~~~~\left. { + {{\nu '}_{n,k}}{x_{min}} - {{\sigma '}_{n,k}}{x_{max}}} \right\}
\end{aligned}
\end{equation}
where ${\sigma'_{n,k}}$ and ${\nu'_{n,k}}$ are the Lagrange multipliers for the constraints. Let ${\bm x^{*}}$ be the NE point. The first order KKT condition for each player's utility at the NE point is
\begin{equation} \label{KKT2}
\begin{aligned}
{\left. {\frac{{\partial {L_2}}}{{\partial {x_{n,k}}}}} \right|_{\bm x = {\bm x^{*}}}} \!\!=\!\! & - {\left. {\frac{{\partial {C_{n,k}}{\left( {{x_{n,k}},{{\bm x}_{ - n}}} \right)}}}{{\partial {x_{n,k}}}}} \right|_{\bm x = {\bm x^{*}}}} + {\lambda _{n,k}} \\
&+ {{\sigma '}_{n,k}} - {{\nu '}_{n,k}}{\rm{ = 0}}.
\end{aligned}
\end{equation}

By comparing (\ref{KKT1}) with (\ref{KKT2}), we find that they are equivalent if
\begin{equation} \label{optimalprice}
{\lambda _{n,k}} =  - {\left. {\sum\limits_{m \ne n,m \in {\cal N}} {\sum\limits_{b \in {\cal K}} {\frac{{\partial {C_{m,b}}\left( {{x_{m,b}},{{\bm x}_{ - m}}} \right)}}{{\partial {x_{n,k}}}}} } } \right|_{\bm x = {\bm x^{opt}}}}.
\end{equation}
Denote
\begin{equation} \label{optimalprice2}
\begin{aligned}
&f(\bm x) =  - \sum\limits_{m \ne n,m \in {\cal N}} {\sum\limits_{b \in {\cal K}} {\frac{{\partial {C_{m,b}}\left( {{x_{m,b}},{{\bm x}_{ - m}}} \right)}}{{\partial {x_{n,k}}}}} }\\
&= \frac{1}{{{T_s}}}\frac{1}{{1 - {D_n}\frac{1}{M}}}\frac{1}{M}\prod\limits_{j = 1}^{k - 1} {P_{\text{idle}}^{n,j}} (1 - P_{\text{idle}}^{n,k}){e^{ - {x_{n,k}}}}\\
&\sum\limits_{m \ne n,m \in {\cal N}} {\sum\limits_{b \in {\cal K}} {\frac{1}{{{\theta _{m,b}}}}\frac{{(1 - {e^{ - {x_{m,b}}}}){\Phi _{m,b}}({{\bm x}_{ - m}})}}{{1 - (1 - {e^{ - {x_{m,b}}}}){\Phi _{m,b}}({{\bm x}_{ - m}})}}} }.
\end{aligned}
\end{equation}
Note that (36) needs some exchange information (more specifically, the activation probability ${\bm{x}_{-n}}$ and the channel condition $H_n$). We assume BS can assist to update $\lambda _{n,k}$. Thus, the channel condition only exchange one time with the BS. And at each iteration, the activation probability ${\bm{x}_{-n}}$ should be exchange with the BS, and the BS seed $\lambda _{n,k}$ to devices to improve the performance.

We can update the price as follows:
\begin{equation} \label{neweq}
{\lambda _{n,k}}[t + 1] = (1 - {\rho _t}){\lambda _{n,k}}[t] + {\rho _t}{f_{n,k}}({\bf{x}}),
\end{equation}
where the step-size $\rho$ is a decreasing sequence.

Through updating the price using (\ref{neweq}) and the best-response function using (\ref{bestresponse}), a price-update algorithm is developed to obtain the local optimality, as described in Algorithm 2. The convergence proof of Algorithm 2 is challenging considering the dynamic price in the game. We can provide a undemanding proof of convergence for our algorithm. When the number of iterations is large, the price ${\lambda _{n,k}}[t]$ tends to steady. As a result, according to Theorem 4, given a fixed price, the algorithm can fast converge to the NE point. Note that, due to the non-convexity of (P2),  Algorithm 2 only solves for local optimal solutions. In other words, there are some saddle points or KKT points of (P2), and the final solution of Algorithm 2 depends on the selection of initial points.

\begin{algorithm}[t]
  \caption{A price-update algorithm}
  \label{Algorithm1}
  \begin{algorithmic}[1]
  \STATE Initialize $x_{n,k}[0]$, $t=0$ and error bound $\xi$.
  \REPEAT
  \STATE $t \leftarrow t + 1$.
  \STATE Update the price using (\ref{neweq}).
  \FOR {$n \in \mathcal{N},k \in \mathcal{K}$}
  \STATE Update $x_{n,k}[t]$ using (\ref{bestresponse}).
  \ENDFOR
  \UNTIL{$\left| {\bm x[t] - \bm x[t - 1]} \right| \le \xi $}
  \end{algorithmic}
\end{algorithm}

\section{Simulation Results}
In this section, numerical results are provided. First, we evaluate the performance of the mMTC system using the proposed effective capacity. Then, the convergence and effectiveness of the distributed barring policy for maximizing each device's utility is demonstrated. Finally, we provide the results using the proposed price-update algorithm for total effective capacity maximization.

In our simulations, the BS is located in the center of an area of $500\times500$ m$^2$ and the devices are randomly and uniformly distributed with in this area. The number of traffic types is $K=2$, and the average packet length is $L=500$ bits. The QoS requirement of traffic type 1 is ${\theta _{n,1}} = {10^{ - 3}},n \in \mathcal{N}$, and that of traffic type 2 is ${\theta _{n,2}} = {10^{ - 5}},n \in \mathcal{N}$. Set $d_{min}^{k} = 0.1$ and $d_{max}^{k} = 0.9, k=1,2$. The activating probability of traffic type 1 is ${d_{n,1}} =0.9,n \in \mathcal{N}$, and that of traffic type 2 is ${d_{n,2}} = 0.5,n \in \mathcal{N}$. The duration of a transmission frame is $T_u = 4$ ms, and that of data transmission is $T_f=3$ ms. The slot $T_d$ is $0.5$ ms.The transmit power of each device is 10 dBm. The noise spectral density is $-174$ dBm/Hz. The large-scale path loss is $PL_n = 60+37.6lg(X_n)$, where $X_n$ is the distance from the $n$-th device to the BS. The packet error probability $\varepsilon_{n,k} = 10^{-5}$. Each point of results is the average of 200 independent runs.

\subsection{Performance analysis of the proposed effective capacity}
In Fig. \ref{fig3}, we evaluate the joint impact of the preambles resources and data transmission resources on the effective capacity of mMTC. The number of devices is $N=100$. The duration of a symbol is $a = 66.7 $~$\mu s$ and the bandwidth of a symbol is set to $15$ KHz. We can evaluate the effective capacity by varying preamble number $M$ and transmission bandwidth $B_n$.  We observe that the effective capacity grows with the preamble resources and the bandwidth in both cases of traffic types. Hence, we can joint allocate the two resources to meet the QoS requirements of devices.

\begin{figure}[!t]
\centering
\includegraphics[width=3.2in]{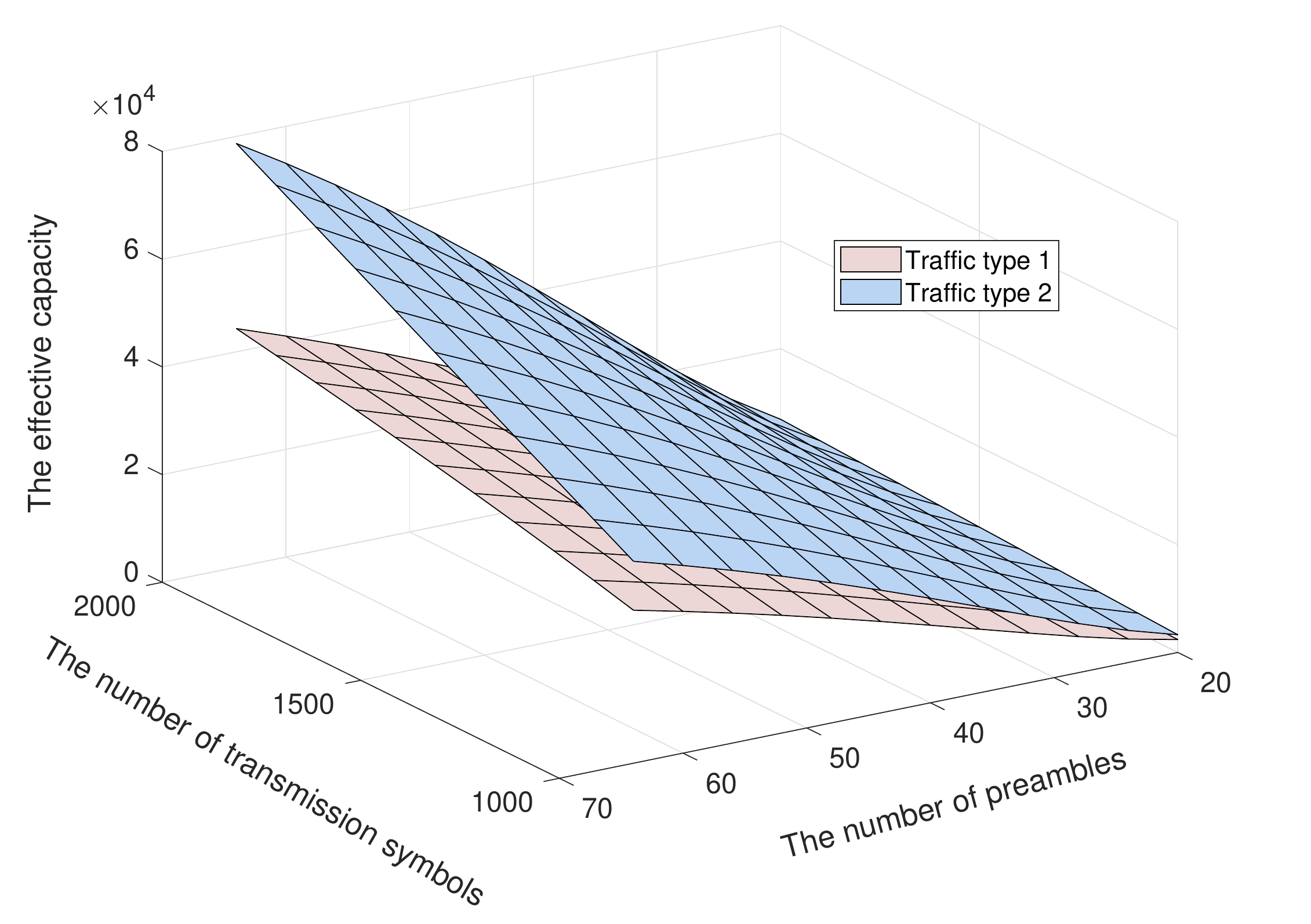}
\vspace{-0.4cm}
\caption{The average effective capacity of one device versus the preambles resources and the data transmission resources.}
\label{fig3}
\vspace{-0.2cm}
\end{figure}

Fig. \ref{fig4} shows the performance of mMTC in terms of different QoS requirements. In Fig. \ref{fig4}(a), we evaluate the average effective capacity with an increasing QoS exponent of traffic type 1, given $\theta_{n,2}=10^{-6}$. It is shown that the effective capacity of traffic type 1 decreases with QoS exponent of traffic type 1, ${\theta _{n,1}}$, and is sensitive to ${\theta _{n,1}}$ within the region ${10^{ - 4}} \le {\theta _{n,1}} \le {10^{ - 2}}$. On the contrary, the effective capacity of traffic type 2 decreases with ${\theta _{n,1}}$. This is because the larger ${\theta _{n,1}}$, the larger the queue idle probability of traffic type 1. Thus, the traffic type 2 has a higher access probability. In Fig. \ref{fig4}(b), we evaluate the average effective capacity with an increasing QoS exponent of traffic type 2, given $d_{n,1}=10^{-3}$. The observation is similar to that from Fig. \ref{fig4}(a).

\begin{figure}[!t]
\centering
\includegraphics[width=3.5in]{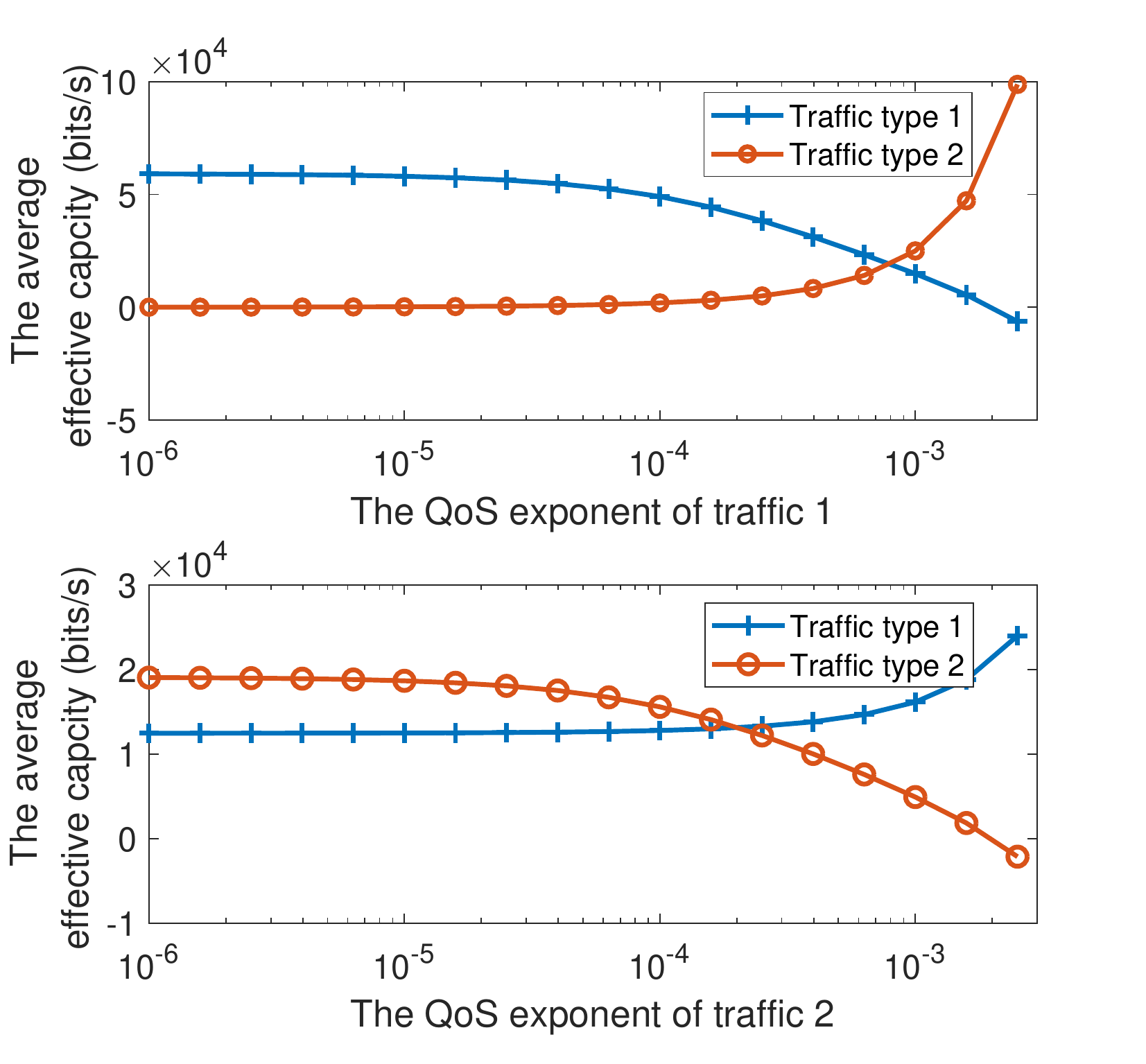}
\vspace{-0.4cm}
\caption{The average effective capacity of one device versus different QoS requirements: (a) Given QoS requirement of traffic type 2, $d_{n,2}=10^{-6}$; (b) Given QoS requirement of traffic type 1, $d_{n,2}=10^{-3}$.}
\label{fig4}
\vspace{-0.2cm}
\end{figure}

\subsection{Convergence and performance of the distributed barring policy}
In Fig. \ref{fig5}, we investigate the convergence of Algorithm 1, where ${\lambda _{n,k}} = 1000$, ${S_{n,k}} = 1000$, and $M=50$. We assume that the successful access probability of each device can be accurately estimated. In Fig \ref{fig5}, we can observe that Algorithm 1 converges fast to the same NE point regardless of the initial point, which can be verified by the property of the game, as discussed in Section IV.A.
\begin{figure}[!t]
\centering
\includegraphics[width=3.2in]{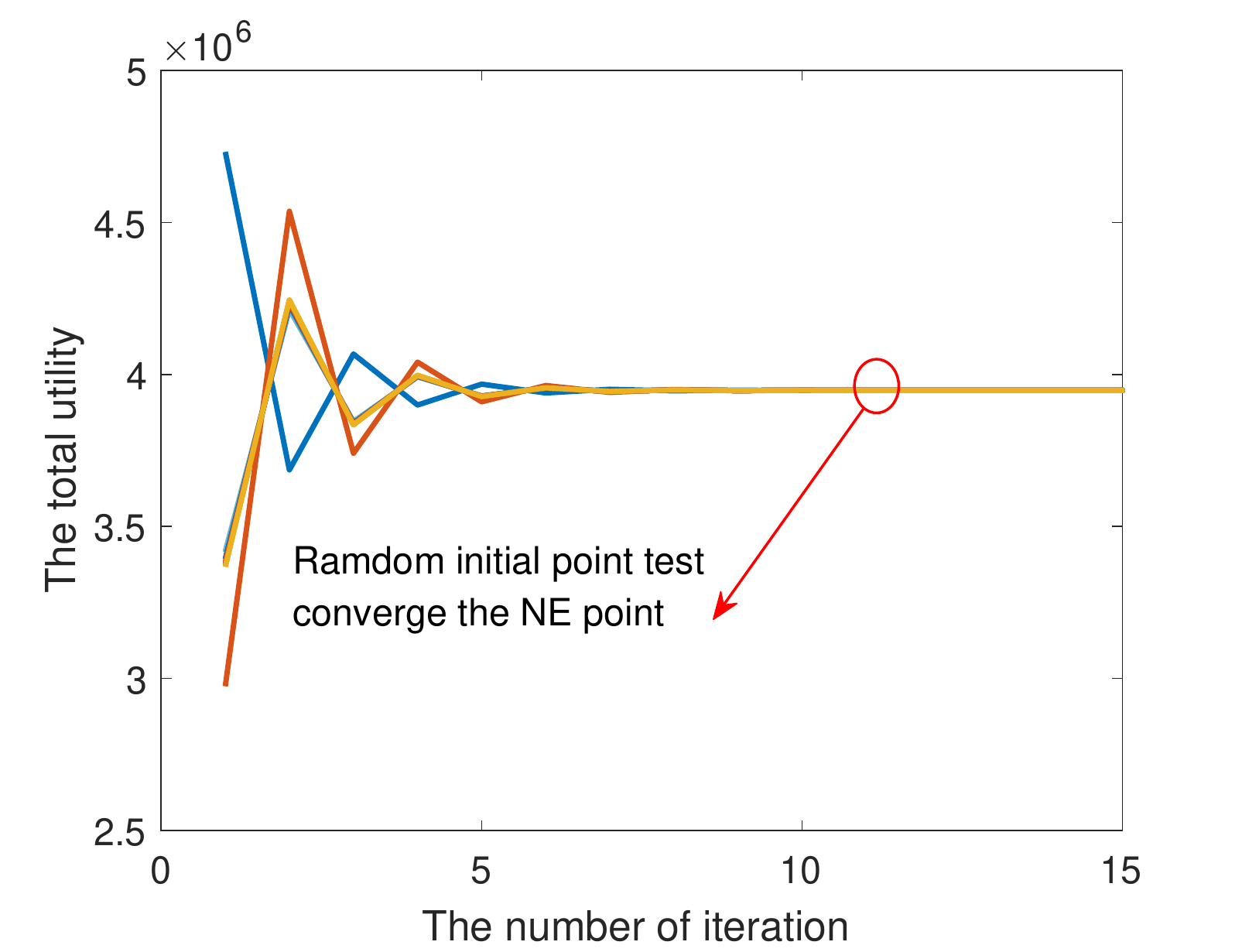}
\vspace{-0.4cm}
\caption{Convergence of Algorithm 1.}
\label{fig5}
\vspace{-0.2cm}
\end{figure}

Fig. \ref{fig6} shows the total effective capacity of Algorithm 1 with different penalty prices, compared with the fixed activating probability. The simulation parameters are the same as those for Fig. \ref{fig5}. It is shown that Algorithm 1 is superior to other strategies. This is because our proposed algorithm can search the NE point. We also observe that the penalty price has an important influence on the total effective capacity, as mentioned earlier. There is an optimal price to maximize the total effective capacity.
\begin{figure}[!t]
\centering
\includegraphics[width=3.2in]{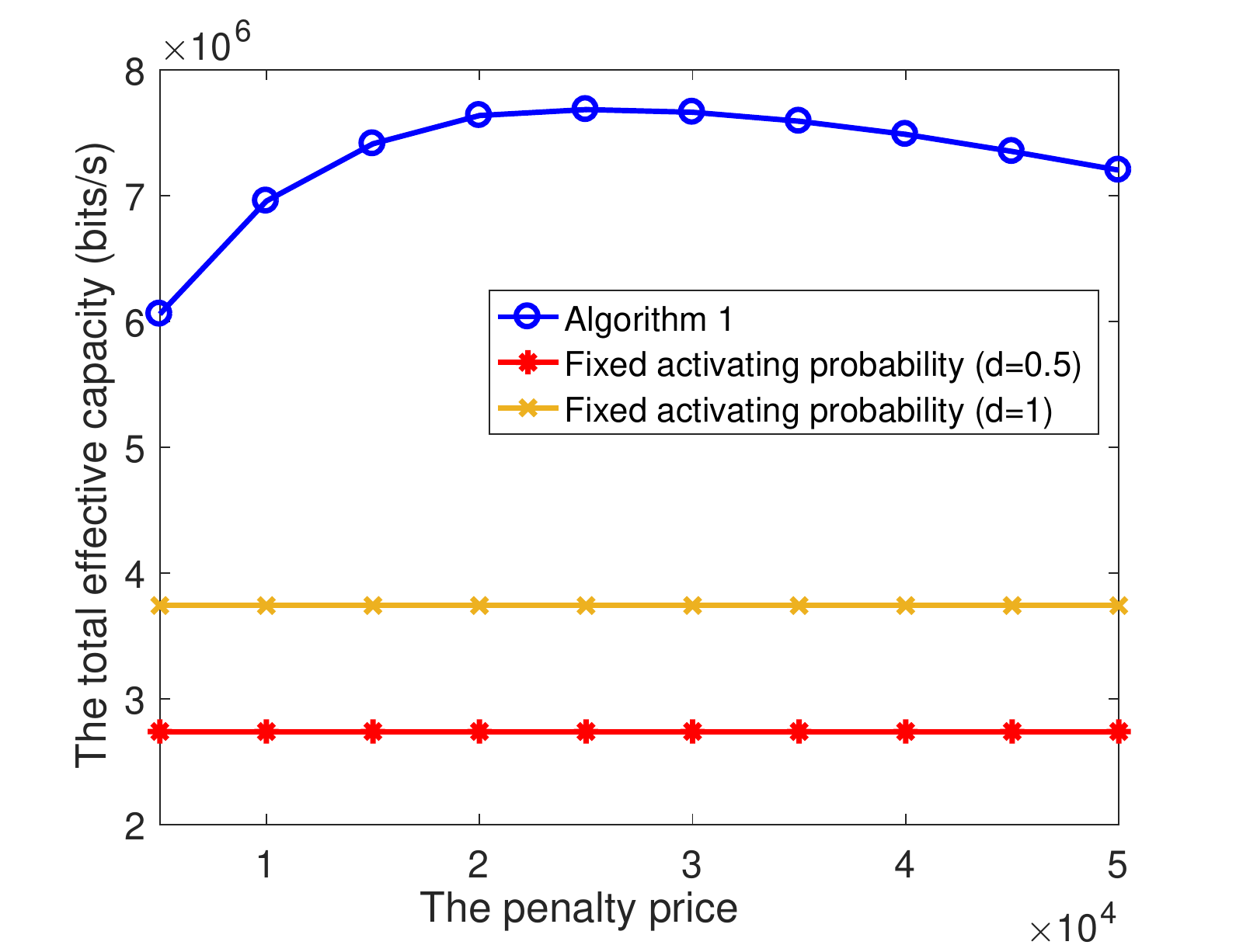}
\vspace{-0.4cm}
\caption{Performance of different strategies with different price.}
\label{fig6}
\vspace{-0.2cm}
\end{figure}

\subsection{Convergence and performance of the price-update barring policy}
In Fig. \ref{fig8}, we investigate the convergence of Algorithm 2, where ${S_{n,k}} = 1000$, and $M=50$.  We observe that Algorithm 2 converges fast to the local optimal point. Fig. \ref{fig9} shows the performance of different algorithms. Compared with the centralized particle swarm optimization (PSO) algorithm, we can find the gain gap is slight between the centralized algorithm and our proposed distributed algorithm. However, our proposed distributed algorithm can fast converge within limited iterations. The price-fixed algorithm 1 cannot leverage device disparity and spatial diversity, resulting in a lower total effective capacity than that of price-update algorithm 2. We also observe that, when the number of iterations is small, the gap between the two algorithms becomes bigger because the preamble resources are so scarce that the lower the price becomes, the severer the collisions become. An appropriate price will alleviate the burst arrivals.

\begin{figure}[!t]
\centering
\includegraphics[width=3.2in]{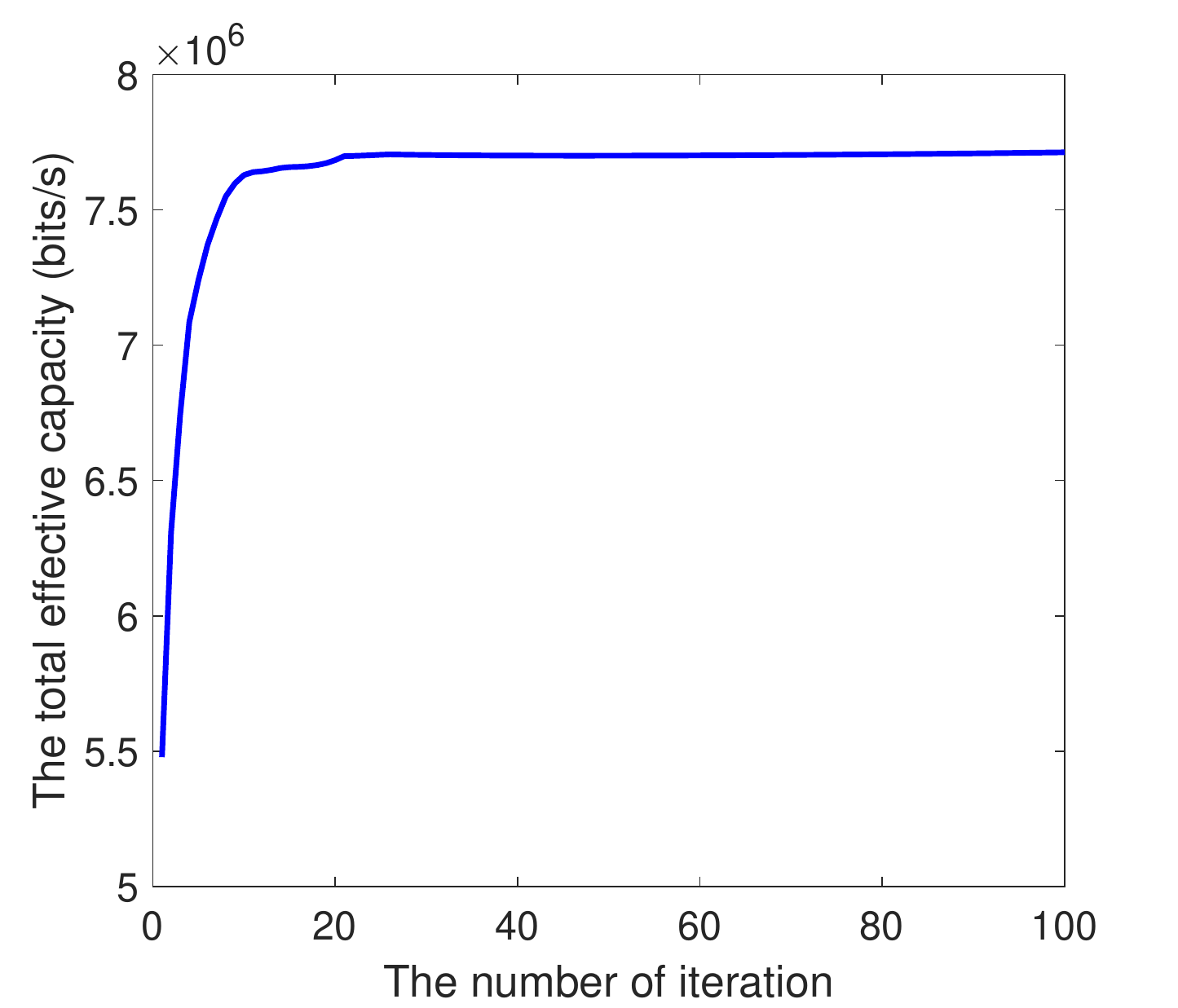}
\vspace{-0.4cm}
\caption{Convergence of Algorithm 2.}
\label{fig8}
\vspace{-0.2cm}
\end{figure}

\begin{figure}[!t]
\centering
\includegraphics[width=3.2in]{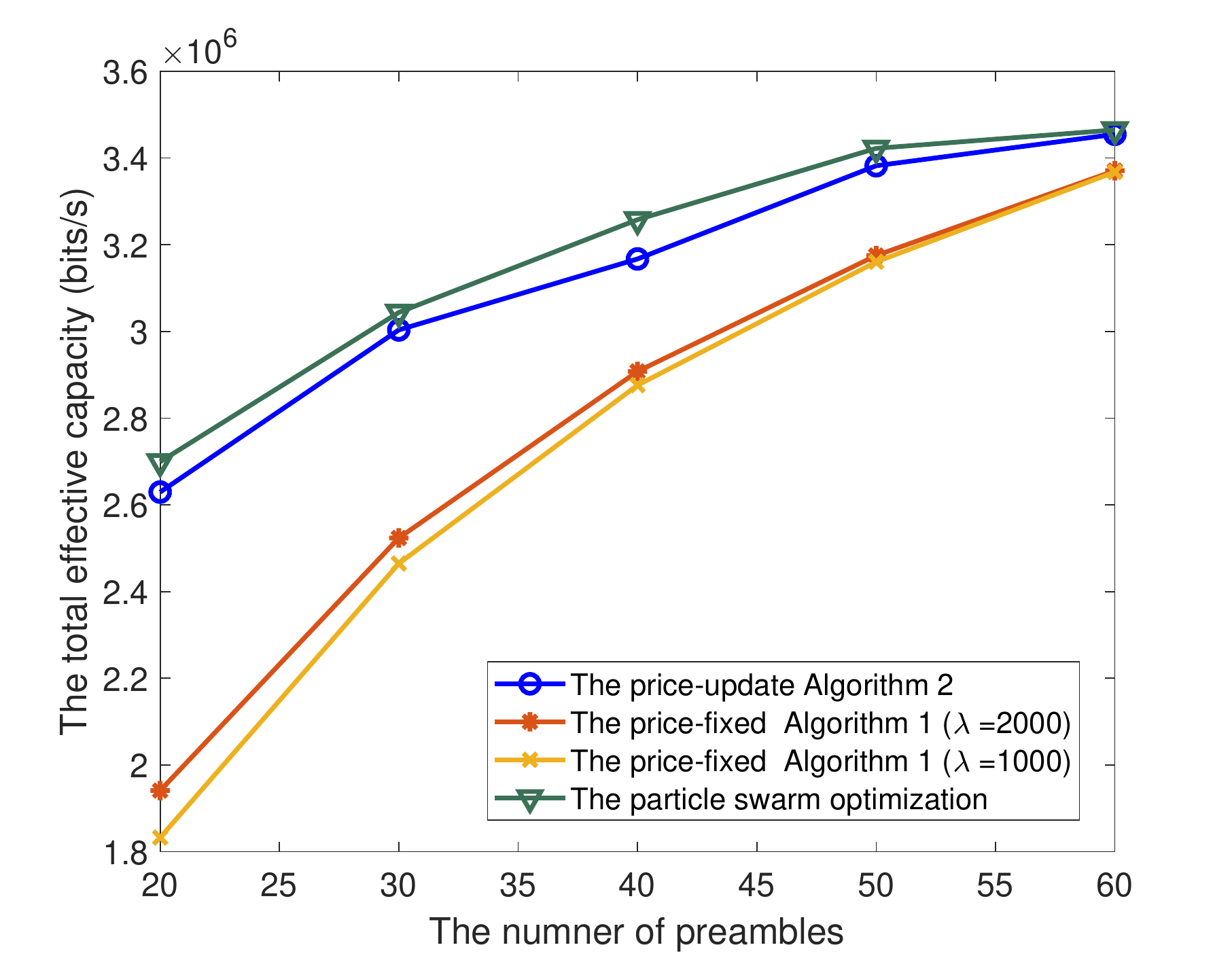}
\vspace{-0.4cm}
\caption{Performance of different algorithms.}
\label{fig9}
\vspace{-0.2cm}
\end{figure}

\section{Conclusion}
In this paper, we investigate a hybrid random access and data transmission protocol design to guarantee the end-to-end QoS of different types of traffic for mMTC. First, we derive a novel effective capacity based on a priority-queueing-based ACB policy. With the proposed effective capacity, the barring policy can be modeled as a non-cooperative game. A distributed iterative algorithm that converges to the NE is then proposed. To further improve the efficiency of NE, we provide a price-update algorithm, which converges to a local optimal point for the effective capacity optimization problem. Simulation results demonstrate the performance of the novel effective capacity, and verify the convergence and the gains of our proposed algorithms.

\section*{Acknowledgment}
The work was supported in part by the National Nature Science Foundation of China Project under Grant 61471058, in part by the Hong Kong, Macao and Taiwan Science and Technology Cooperation Projects under Grant 2016YFE0122900, in part by the Beijing Science and Technology Commission Foundation under Grant 201702005 and in part by the 111 Project of China under Grant B16006, in part by BUPT Excellent Ph.D. Students Foundation (CX2017305) and the China Scholarship Council (CSC).



\end{document}